\newcommand{\numberthis}{\refstepcounter{equation}\tag{\arabic{equation}}}
\newtheorem{corollary}{Corollary}
\newtheorem{proposition}{Proposition}
\newtheorem{lemma}{Lemma}
\DeclareMathOperator*{\fmin}{{min^f}}
\begin{document}
%
\title{A Discrete-time Reputation-based Resilient Consensus Algorithm for Synchronous or Asynchronous Communications\thanks{{\scriptsize 
This
work was supported in part by FCT project POCI-01-0145-FEDER-031411-HARMONY; partially supported by the project MYRG2018-00198-FST of the University of Macau; by the Portuguese Fundação para a Ciência e a Tecnologia (FCT) through Institute for Systems and Robotics (ISR), under Laboratory for Robotics and Engineering Systems (LARSyS) project UIDB/50009/2020, through project PCIF/MPG/0156/2019 FirePuma and through COPELABS, University Lusófona project UIDB/04111/2020.}}}
%
%
%

\author{Guilherme Ramos\thanks{{\scriptsize G. Ramos is with the Department of Electrical and Computer Engineering, Faculty of Engineering, University of Porto, Portugal. 
	{\tt\small gramos@fe.up.pt}}}, 
	Daniel Silvestre\thanks{{\scriptsize D. Silvestre is with NOVA University of Lisbon and also with the Institute for Systems and Robotics, Instituto Superior T\'ecnico, University of Lisbon, 1049-001 Lisbon, Portugal. {\tt\small dsilvestre@isr.ist.utl.pt}}}, Carlos Silvestre 
\thanks{{\scriptsize C. Silvestre is with the Dep. of Electrical and Computer Engineering of the Faculty of Science and Technology of the University of Macau, Macau, China, on leave from Instituto Superior T\'ecnico/Technical University of Lisbon, Lisbon, Portugal. {\tt\small csilvestre@umac.mo}}}
}

\maketitle

\begin{abstract}
We tackle the problem of a set of agents achieving resilient consensus in the presence of attacked agents. We present a discrete-time reputation-based consensus algorithm for synchronous and asynchronous networks by developing a local strategy where, at each time, each agent assigns a reputation (between zero and one) to each neighbor. The reputation is then used to weigh the neighbors’ values in the update of its state.
Under mild assumptions, we show that: (i) the proposed method converges exponentially to the consensus of the regular agents; (ii) if a regular agent identifies a neighbor as an attacked node, then it is indeed an attacked node; (iii) if the consensus value of the normal nodes differs from that of any of the attacked nodes’ values, then the reputation that a regular agent assigns to the attacked neighbors goes to zero.
Further, we extend our method to achieve resilience in the scenarios where there are noisy nodes, dynamic networks and stochastic node selection. Finally, we illustrate our algorithm with several examples, and we delineate some attacking scenarios that can be dealt by the current proposal but not by the state-of-the-art approaches. 

\end{abstract}


%


%
%
%
%
\section{Introduction}\label{sec:intro}

In the last decades, much work has been devoted to networked control systems with a focus on cybersecurity aspects. 
Communications over shared mediums potentiates the exploitation of vulnerabilities that may result in consequences, sometimes, catastrophic. 

A central problem in networked multi-agent systems is the so-called consensus problem, where a set of agents interacting locally through a communication network attempt to reach a common value. In other words, the final value is the outcome of running a distributed algorithm among nodes which can communicate according to the network topology.    
Therefore, the problem of consensus is the basilar stone of a multitude of applications, ranging areas such as: optimization~\cite{tsitsiklis1986distributed,johansson2008subgradient}; motion coordination tasks -- flocking and leader following~\cite{jadbabaie2003coordination}; rendezvous problems~\cite{cortes2006robust}; computer networks resource allocation~\cite{chiang2007layering}; and the computation of the relative importance of web pages, by PageRank like algorithm~\cite{silvestre2018pagerank}.

Notwithstanding, the consensus problem further appears as a critical subproblem of major applications. 
 A distributed Kalman Filter based on two consensus systems was proposed in~\cite{olfati2007distributed} to estimate the 2D motion of a target. The work was experimentally assessed in~\cite{alriksson2007experimental} to estimate the motion of a real robot. 
Due to the relevance of consensus methods and the cybersecurity aspects, a crucial property to ensure in any consensus algorithm is its ability to overcome abnormal situations, i.e., achieve resilient consensus.

\textbf{Resilient consensus.}
Each agent in a network that works as expected must be able to filter erroneous information and be capable of reaching the consensus value that results from legit network information. 
A new fault-tolerant algorithm to accomplish approximate Byzantine consensus in asynchronous networks is introduced in~\cite{haseltalab2015approximate}. 
The method also applies to synchronous networks, to networks with communication paths with delay, and the authors extend the results to time-varying networks.
Subsequently, authors performed the convergence rate analysis of the fault-tolerant consensus algorithm of~\cite{haseltalab2015approximate}
 in~\cite{haseltalab2015convergence}.

The work in \cite{silvestre2013SSVO} introduced a system to tackle worst-case and stochastic faults in the particular scenario of gossip consensus. 
The developed method may be integrated into the consensus algorithm to achieve resilient consensus, making the nodes converge to a steady state belonging to the set resulting from the intersection of the estimates that each agent perceives  for the remaining nodes~\cite{silvestre2014finite}. 
The technique was extended to a wider family of gossip algorithms in \cite{silvestre2017automatica}. 
These methods converge, and they offer a theoretical bound on the attacker signal, that can be computed \emph{a priori}. However,  their computational complexity for the worst-case undetectable attack makes the approach inviable.
For detecting attacks, their computational complexity is, in general, worse than that of the new algorithm proposed in this work. For detecting attacks, their computational complexity is, in general, worse than that of the new algorithm proposed in this work. For attackers' isolation, the computational complexity in~\cite{silvestre2017automatica,ramossilvestresCDC,ramossilvestresIJoC} is exponential, contrasting with the proposed polynomial-time approach. 

In~\cite{oksuz2018distributed}, two fault-tolerant and parameter-independent consensus algorithms are introduced to deal with misbehaving agents. 
One of the approaches adaptively estimates, using a fault-detection scheme, how many faulty agents are in the network. 
Whenever there are $f$ faulty agents the authors' method converges if the network of non-faulty nodes is $(f + 1)$-robust. 
The other approach consists of a non-parametric Mean-Subsequence-Reduced algorithm, converging when both the network of non-faulty agents is $(f + 1)$-robust and the non-faulty agents possess the same amount of in-neighbors.

In~\cite{sundaram2018distributed}, the authors characterize the limitations on the performance of any distributed optimization algorithm in the presence of adversaries. Additionally, they investigated the vulnerabilities of consensus-based distributed optimization protocols to identify agents that are not following the consensus update rule. 
The authors suggest a resilient distributed optimization method, and they provide lower bounds on the distance-to-optimality of attainable solutions under any algorithm, resorting to the notion of maximal $f$-local sets of graphs for cases where each agent has at most $f$ adversaries neighbors.

In~\cite{dibaji2015consensus}, the authors explore the scenario where each regular agent in a network refreshes its state based on local information, using a deliberated feedback law, and that malicious nodes update their state arbitrarily. 
The authors present an algorithm for the consensus of second-order sampled-data multi-agent systems. 
Assuming that the network has sufficient connectivity, the authors proposed a resilient consensus method, in which each node ignores the information of neighbors having large/small position values, considering that although the global topology of the network is unknown to regular agents, they know a priory the maximum number of malicious neighbor nodes. 
Next, using similar reasoning, the authors in~\cite{kikuya2017fault} extend the previous work, presenting a consensus algorithm for clock synchronization in wireless sensor networks.

Subsequently, in~\cite{sundaram2016ignoring}, each agent may have a particular threshold for the number of extreme neighbors it ignores. 
Under such a setup, the author draws network conditions that ensure reaching a consensus. 
Further,  the author presented conditions under which the consensus is asymptotically almost surely (with probability 1) achieved in random networks and with random node's thresholds.  
 In~\cite{usevitch2018resilient}, it is proposed a resilient leader-follower consensus to arbitrary reference values, where each agent ignores a portion of extreme neighbors. This consensus method guarantees a steady state value lying in the convex hull of initial agent states. 
In a similar approach, the authors of~\cite{saldana2017resilient} presented a resilient consensus algorithm for time-varying networks of dynamic agents. 
In~\cite{dibaji2017resilient}, the case of quantized transmissions, with communication delays and asynchronous update schemes is studied for update times selected in either a deterministic or random fashion.

In~\cite{chen2018attack}, the authors present a consensus+innovations estimator. 
In the proposed method, each node in the network thresholds the gain to its innovations term. The authors ensure that if less than half of the agents’ sensors fall under attack, then all of the agents’ estimates converge with polynomial rate to the parameter of interest.
In~\cite{dibaji2019resilient}, the authors present a resilient fully distributed averaging algorithm that uses a resilient retrieval procedure, where all non-Byzantine nodes send their initial values and retrieve those of other agents. The convergence is ensured under a more restrictive than the conventional node connectivities assumption.

\textbf{Classes of consensus problems.}  
We may classify the problems of consensus based on the domain of the time update as: discrete-time (discrete), as in~\cite{zhu2010discrete,cai2012average}; or as continuous-time (continuous), as in~\cite{ren2005consensus,cai2012average}.
Also, it can be classified based on the network communication:  synchronous, see for instance~\cite{zhu2010discrete}; or asynchronous, see for example~\cite{tsitsiklis1986distributed,haseltalab2015approximate,liu2017asynchronous}. 
Moreover, the communication between agents may be: deterministic, see~\cite{cai2012average} for example; or stochastic, as in~\cite{boyd2006randomized,antunes2011consensus,silvestre2018broadcast} for instance. 
Lastly, the agents' network of communication can be categorized as: static, see~\cite{cai2012average,oksuz2018distributed,silvestre2018broadcast}; or dynamic, changing in time, see  in~\cite{olfati2004consensus,zhu2010discrete,saldana2017resilient}.  

Here, we propose a resilient consensus algorithm that can be used for discrete-time, synchronous or asynchronous communication and static or dynamic network of communication. 
By synchronous we mean algorithms in which updates occur at the same time for all the nodes, whereas in an asynchronous setting, at each time, any non-empty subset of nodes may update their state.  

\textbf{Reputation systems.}
The concept of reputation of an entity is an opinion about that entity that usually results from an evaluation of the entity based on a set of criteria. 
Everyday, we implicitly assign a reputation to persons, companies, services, and many other entities. 
We do so by evaluating the entity behavior and comparing to what we would expect, and, further, by assessing other important (with large reputation) entities opinion. 
In fact, reputation is an ubiquitous, spontaneous and highly efficient utensil of social control in natural societies, emerging in  business, education and online communities.

Hence, this ubiquitous concept has been ported to several important application with relative success, such as in ranking systems~\cite{li2012robust,saude2017robust,saude2017reputation,10.1145/3397271.3401278}, where reputation-based ranking systems are proposed and shown to cope better with attacks and to the effect of bribing. 
It is also an important measure to assess in the field of Social Networks as in~\cite{josang2007survey,zhu2015authenticated}. 
In~\cite{pujol2002extracting}, the authors propose a method to address the problem of calculating a degree of reputation for agents acting as assistants to the members of an electronic community.

For more related work on reputation systems, we refer the reader to the surveys  in~\cite{josang2007survey,hendrikx2015reputation} and references therein.

%

\textbf{Main contributions:}  
(i) we propose a (fully distributed) discrete-time, reputation-based consensus algorithm resilient to attacks that works for both synchronous and asynchronous networks; with this algorithm, each agent only needs to have a low computational power to do calculations with the neighbors' values;
 (ii) we show that the proposed algorithm converges with exponential rate; 
 (iii) for attacks with some properties, we show, theoretically, that our method does not produce false positives and holds polynomial time complexity, although we observe the same behavior for other types of attacks.

%
\subsection{Preliminaries and Terminology}\label{sec:notation}
First, we recall some concepts of graph theory, and we set the notation that we will adopt in this manuscript. 
A \emph{directed graph}, or simply a \emph{digraph}, is an ordered pair $\mathcal G=(\mathcal V,\mathcal E)$, where $\mathcal V$ is a set of $n>1$ \emph{nodes}, and $\mathcal E\subseteq \mathcal V\times\mathcal V$ is a set of \emph{edges}. Edges are ordered pairs which represent a relationship of accessibility between nodes. If $u,v\in\mathcal V$ and $(u,v)\in\mathcal E$ then the node $v$ directly accesses information of node $u$. 	
In the scope of consensus algorithms, we also refer to a digraph as a \emph{network}, and we further say that nodes are \emph{agents} of the network.  
A \emph{complete digraph} or \emph{complete network} is a digraph such that all nodes can directly access information of every other node. 
Given an agent $v\in\mathcal V$, we denote the set of nodes that $v$ can directly access information in the network $\mathcal G=(\mathcal V, \mathcal E)$ by $\mathcal N_v=\{v\}\cup\{u\,:\,(u,v)\in\mathcal E\}$, and they are the \emph{neighbors} of $v$.  The proper neighbors of $v$ are $\bar{\mathcal N_v}=\mathcal N_v\setminus \{v\}$. 
The \emph{in-degree} of a node $v\in\mathcal V$, denoted by $d_v$ is the number of neighbors of $v$, i.e.,  
$d_v=|\mathcal N_v|$. Likewise, the \emph{out-degree} of a node $v\in\mathcal V$, $o_v$, is the number of nodes that have $v$ has neighbor, i.e., $o_v=|\{u\,:\,v\in\mathcal N_u\}|$. 
A \emph{path} in $\mathcal G=(\mathcal V,\mathcal E)$ is a sequence of nodes $(v_1,v_2,\hdots v_k)$ such that $(v_i,v_{i+1})\in\mathcal E$ for every $i=1,\hdots, k-1$. 
A convenient way of representing a digraph is by means of its adjacency matrix $A\in\mathbb R^{n\times n}$, where $A_{u,v}=1$ if $(u,v)\in\mathcal E$, and $A_{u,v}=0$, otherwise. 
A \emph{subgraph} or a \emph{subnetwork} $\mathcal H = (\mathcal V',\mathcal E')$ of a digraph $\mathcal G=(\mathcal V,\mathcal E)$ is a digraph such that $\mathcal V'\subset \mathcal V$, $\mathcal E'\subset\mathcal E$. 
If $\mathcal A$ denotes a set of nodes $\mathcal A\subset\mathcal V$, we denote by $\mathcal G\setminus\mathcal A$ the subgraph $\mathcal H$ of $\mathcal G$ that consists in $\mathcal H=(\mathcal V\setminus\mathcal A,\mathcal E')$, where $\mathcal E'=\{e\in\mathcal E\,:\,e=(u,v)\text{ and } u,v\notin\mathcal A\}$.

Given a finite and non-empty array of possibly repeated elements sorted by increasing order  $\mathcal C=\{x_1,\ldots, x_n\}$, with $x_i\in\mathbb R$, with at least one $j\neq i$ and $x_i\neq x_j$, we define the following element:
$
	\displaystyle\fmin_{x\in\mathcal C}x = y_f,
$
where, inductively defined as  
\[
 y_f = {\small\begin{cases}
		x_1, & \text{if }f=1\\[0.2cm]
		 \min\left\{x\in\mathcal C\,:\,y_{f-1} < x < x_n\right\}, & \text{if }\displaystyle\mathop{\exists}_{x\in\mathcal C}y_{f-1} < x < x_n \\
		 		y_{f-1}, & 
\text{otherwise.}
	\end{cases}}
\]

In other words, we are computing the $f$th smallest element of the set obtained from the array $\mathcal C$ by discarding repeated elements, but ensuring that it is not the maximum element. 
This definition will be very important to the reputation-based consensus method we propose, because we need to normalize a set of values,  dividing them by the difference between the maximum and the $\fmin$ element.  
 For instance, consider the sets of sorted elements $\mathcal C=\{1,2,2,2,3\}$. 
If $f=1$ then $\displaystyle\fmin_{x\in \mathcal C}x=1$, and if $f\geq 2$ then $\displaystyle\fmin_{x\in \mathcal C}x=2$.


\section{Reputation-Based Consensus}\label{sec:rep_based} 

Consider a network of agents $\mathcal G=(\mathcal V,\mathcal E^{(k)})$, with initial states $x_v^{(0)}\in\mathbb R$, for $v\in \mathcal V$. 
In the non-attacked scenario, agents can reach consensus through the use of a distributed linear iterative algorithm with dynamics given by: 
\begin{equation}\label{eq:dyn}
x^{(k+1)} = W^{(k)} x^{(k)},
\end{equation}
where $x^{(k)}$ is the vector collecting the $n$ agents states at time step $k>0$, and the matrix $W^{(k)}\in\mathbb R^{n\times n}$ is such that: (i) $W_{u,v}=0$ if the agents $u$ and $v$ do not communicate, and (ii) the agents converge to the same quantity, i.e.,  $\displaystyle\lim_{k\to\infty}x^{(k)}=x^\infty$.  

Additionally, we consider a set of attacked agents $\mathcal A\subset\mathcal V$. 
If the agents $a\in\mathcal A$ do not follow the update rule of the consensus procedure, then each regular agent, $v\in\mathcal V\setminus\mathcal A$, should be able to identify and discard the attacked agents' values in the computation of the consensus value. In order to solve the problem stated
, we make the following assumption:

\noindent$\,\,\bullet\,\,$	For each regular agent,  $v\in\mathcal V\setminus\mathcal A$, more than half of the neighbors are regular agents, i.e., $|\mathcal N_v\cap\mathcal A|<|\mathcal N_v|/2$ and the network of normal nodes is connected.

\noindent The assumption we made is a typical assumption in the state-of-the-art methods to achieve resilient consensus.
We remark that the assumption we do make is equivalent to the $r$-robustness ($(r,1)$-robust) defined in~\cite{kikuya2017fault}.

Further, observe that the previous assumption is reasonable, because each regular agent needs to be able to divide his neighbors into the set of normal nodes and the set of attacked ones, by comparing the information that it receives. Hence, if the majority of the information is not legitimate there is no redundancy to allow to identify the attacked neighbors. 

\subsection{Attacker model}
In what follows, we consider an attacker that may corrupt the state of the nodes in the subset $\mathcal{A}$ by adding an unbounded signal. The attacked dynamics are a corrupted version of~\eqref{eq:dyn} as follows:
\setcounter{equation}{1}
\begin{equation}
 x^{(k+1)} = W^{(k)} x^{(k)} + \Delta^{(k)},
\end{equation}
where $\Delta^{(k)}\in\mathbb R^n$, which entails the assumption that the attacker cannot corrupt the communication between nodes to send different messages to distinct neighbors. 
Observe that this assumption allows the attacker to change the state of a subset of agents to (possibly) different values. For example, in a network with 10 agents, an attacker may change the state of agents 3 and 5 using different values, but it cannot change the network communication scheme.  

Furthermore, the attacker cannot create artificial nodes nor change the network topology, i.e., the structure of $W^{(k)}$ and the dimension $n$ are fixed in (2). 
Notice that if a malicious entity could create nodes in the network, it would be impossible to deter, as the attacked nodes would be the majority regardless of $n$. 

Additionally, the attack cannot target the initial state, i.e.,  $\Delta^{(0)} = 0$, since this scenario would be undetectable. The sequences of state values for the attacked version would be the same as a normal execution of the algorithm with the attack value as initial state. 




\subsection{Reputation-based consensus \emph{(\textsf{RepC})}}\label{sec:rbc} 

Next, we propose a reputation-based  consensus algorithm (\textsf{RepC}). 
The idea behind the algorithm is that, each time an agent obtains information (states) from its neighbors, the agent measures how discrepant is, in average, the state from one neighbor regarding the states of the remaining ones and its own state. 
The \textsf{RepC} is composed by two phases: (i) \textbf{identification of the attacked nodes}; (ii) computation of the \textbf{consensus}. 
 
Notice that the proposed algorithm is a fully distributed discrete-time consensus algorithm  that works for both synchronous and asynchronous networks. Also, each agent only needs to have a low computational power to do calculations with the neighbors' values.

\subsubsection{Synchronous communication \emph{\textsf{RepC}}}
Given the maximum number of allowed attacked nodes $f$, the identification of the attacked nodes is performed by the following iterative scheme:
{\small
\allowdisplaybreaks
\begin{align*}\label{eq:id} 
     &\textbf{Reputation update: }\\
		\tilde c_{ij}^{(k+1)} &  =  \begin{cases} 
			1-
	\displaystyle\displaystyle\sum_{v\in \bar{\mathcal N_i} }\frac{|x_j^{(k)}-x_v^{(k)}|}{|\mathcal N_i|},	& j\in\mathcal N_i\\
			0, &\text{otherwise}
		\end{cases}\\		     &\textbf{Normalized Reputation update: }
			\\[0.2cm]
		 \tilde\tilde c_{ij}^{(k+1)} & =  \begin{cases} \displaystyle\frac{\tilde c_{ij}^{(k+1)}-\displaystyle\fmin_{v\in\bar{\mathcal N_i}}\tilde c_{iv}^{(k+1)}}{\displaystyle\max_{v\in\bar{\mathcal N_i}}\tilde c_{iv}^{(k+1)}-\displaystyle \fmin_{v\in\bar{\mathcal N_i}}\tilde c_{iv}^{(k+1)}}, & i\neq j\\
		 1, & \text{otherwise} 
 \end{cases}\\
      & \textbf{Normalized Reputation update with confidence $\varepsilon$: }
\\
		 c_{ij}^{(k+1)} & =  \begin{cases} 
		 \,\tilde\tilde c_{ij}^{(k+1)}, &\text{if }\,\tilde\tilde c_{ij}^{(k+1)}>0, \\
		 \varepsilon^{k+1},& \text{otherwise}\\
		\end{cases}\\
		     & \textbf{Consensus state update: }
        \\ 
		x_i^{(k+1)} & = \displaystyle\sum_{j\in\bar{\mathcal N_i}} c_{ij}^{(k)}x_j^{(k)}\bigg/\displaystyle\sum_{j\in\bar{\mathcal N_i}} c_{ij}^{(k)},\\
\end{align*}
}
{\small $c_{ij}^{(k+1)}=0$} if {\small $j\notin\mathcal N_i$}, and {\small $c_{ii}^{(k+1)}=1$}, and
where  {\small $c_{ij}^{(0)}=1$} for {\small $j\in\bar{\mathcal N_i}$} and {\small $c_{ij}^{(0)}=0$} otherwise, and {\small $x_i^{(0)}$} is the initial value of each agent $i$.
Further, {\small $\varepsilon\in]0,1[$} is a \emph{confidence factor} which guarantees that each agent does not discard immediately values that are discrepant from its neighbors' average. 

Notice that the selected value for $\varepsilon$ must be small to have a negligible impact on the agents'  consensus states. Also, a large $\varepsilon$ may cause an agent to do not detect an attacked neighbor. This, in turn, makes the asymptotic consensus to deviate from the consensus without attacked agents towards a combinations of the attacked agents asymptotic states. 
We illustrate this property in Section~\ref{sec:ill_exam}. 

Further, notice that the proposed method computes a weighted average of the agents' values. Therefore, we can ensure that the final consensus state is a convex combination of the agents' initial states. 

\subsubsection{Asynchronous communication \emph{\textsf{RepC}}}
The asynchronous version of algorithm \textsf{RepC} consists of, at each instance of time, the agents that communicate, $\mathcal A'\subset\mathcal A$, follow Equation~\eqref{eq:id}, where $\bar{\mathcal  N}_i$ is replaced by $\bar{\mathcal N}_i\cap \mathcal A'$.

Another interesting fact is that the iterative scheme~\eqref{eq:id} may also be used in the scenario where the network of agents evolves with time. 
The results in Section~\ref{sec:rbc} can be restated for this scenario by considering that the set of neighbors of a node is dynamic, and by verifying, at each time, that each agent has more than two neighbors and more than half of them are regular agents. 
In Section~\ref{sub:dyn_net} and~\ref{sub:dyn_net_noisy}, we illustrate the dynamic network of agents and dynamic network with noisy agents scenarios.

The first important property to prove about \textsf{RepC} is that it converges. 
To simplify the proof, we assume that we are in the scenario of synchronous communication. 
The general proof follows the same steps, but it is more complex and it needs more complex notation to denote the set of neighbors with which a node communicates at each time. 
Additionally, we assume that $x_i^{(k)}\in[0,1]$. 
Notice that this corresponds to do a bijection of each $x_i^{(0)}$ as $\tilde x_{i}^{(0)}=( x_{i}^{(0)}-x_{\min}^{(0)})\big/(x_{\max}^{(0)}-x_{\min}^{(0)})$. 
Further, in the following proofs and for technical reasons, we assume that each attacked agent shares a state that is converging to some value. 
This simplification is needed in order to derive theoretical guarantees of the proposed method. 
Although, in practice, the algorithm is still effective under other circumstances, as we illustrate in Section~\ref{sec:ill_exam}.

\begin{lemma}\label{lemma:conv}
	If for any $i\in\mathcal V$ we have that $|\mathcal N_i|>2$, then each agent that follows the iterative scheme in~\eqref{eq:id} converges.
\end{lemma}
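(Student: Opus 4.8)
The goal is to show that each agent's state sequence $(x_i^{(k)})_{k \ge 0}$ produced by the iteration~\eqref{eq:id} converges. I would approach this by recognizing that, at each time step, the consensus state update is a convex combination of the neighbors' previous states: writing $w_{ij}^{(k)} = c_{ij}^{(k)} / \sum_{\ell \in \bar{\mathcal N}_i} c_{i\ell}^{(k)}$ (with $w_{ii}^{(k)}$ handled via the convention $c_{ii}^{(k)}=1$), the update reads $x_i^{(k+1)} = \sum_j w_{ij}^{(k)} x_j^{(k)}$ with $w_{ij}^{(k)} \ge 0$ and $\sum_j w_{ij}^{(k)} = 1$. So $x^{(k+1)} = P^{(k)} x^{(k)}$ for a sequence of row-stochastic matrices $P^{(k)}$, and the problem reduces to a classical convergence question for inhomogeneous products of stochastic matrices. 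The key quantitative fact to extract from the algorithm is a \emph{uniform lower bound} on the nonzero weights $w_{ij}^{(k)}$: the confidence factor forces $c_{ij}^{(k)} \ge \varepsilon^{k}$ whenever the term would otherwise be set to zero, and $c_{ii}^{(k)} = 1$; since there are at most $|\mathcal N_i| \le n$ terms in the denominator, each of which lies in $[0,1]$, we get $w_{ij}^{(k)} \ge \varepsilon^{k}/n$ for every edge present, and in particular the self-weight satisfies $w_{ii}^{(k)} \ge 1/n$ (using $c_{ii}^{(k)}=1$ in the numerator and $\le n$ in the denominator), which is a time-\emph{independent} bound.

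From here the plan is to invoke (or reprove) the standard contraction argument on the spread $V^{(k)} = \max_i x_i^{(k)} - \min_i x_i^{(k)}$. Because every $P^{(k)}$ has a positive diagonal uniformly bounded below by $1/n$ and the graph of regular nodes is connected (so some fixed number $m \le n$ of steps suffices for the product $P^{(k+m-1)}\cdots P^{(k)}$ to be a single positive-entry stochastic matrix with all entries $\ge (1/n)^{m}$ along self-loops times the edge weights), the ergodic coefficient of the $m$-step product is bounded away from $1$ by a constant $\rho < 1$ that does \emph{not} depend on $k$. This is where the hypothesis $|\mathcal N_i| > 2$ enters: it guarantees $\bar{\mathcal N}_i$ is nonempty (indeed has at least two elements), so the denominators $\sum_{j \in \bar{\mathcal N}_i} c_{ij}^{(k)}$ are strictly positive and the update is well-defined, and it keeps the weights from degenerating. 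Consequently $V^{(k+m)} \le \rho V^{(k)}$, so $V^{(k)} \to 0$ geometrically. Since $(x_i^{(k)}) \subseteq [0,1]$ is bounded and the spread collapses, one shows each coordinate is Cauchy: $|x_i^{(k+1)} - x_i^{(k)}| \le V^{(k)}$ (as $x_i^{(k+1)}$ is an average of values within the current spread of $x_i^{(k)}$), and $\sum_k V^{(k)} < \infty$ by the geometric bound, giving convergence of each $x_i^{(k)}$.

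One subtlety I would be careful about: the attacked agents' states $x_a^{(k)}$ for $a \in \mathcal A$ also feed into the regular agents' updates, so the matrix $P^{(k)}$ acts on the full vector including attacked coordinates. Here I would use the standing assumption from the start of the excerpt that each attacked agent shares a state converging to some value $x_a^\infty$; treating the attacked coordinates as an exogenous but convergent input, the regular subsystem is $x_R^{(k+1)} = P_{RR}^{(k)} x_R^{(k)} + P_{RA}^{(k)} x_A^{(k)}$ where $P_{RA}^{(k)} x_A^{(k)}$ converges, and the sub-stochastic contraction still applies because each regular node puts weight at least $1/n$ on regular neighbors (itself), yielding convergence of $x_R^{(k)}$ as well. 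The main obstacle, and the step I would spend the most care on, is making the uniform-in-$k$ lower bound on the weights fully rigorous despite the $\varepsilon^{k+1}$ floor decaying to zero — the point being that the \emph{self}-weight $c_{ii}^{(k)} = 1$ never decays, so the diagonal of $P^{(k)}$ stays uniformly positive, which is exactly what the Wolfowitz/Hajnal-type argument needs; the decaying $\varepsilon^{k+1}$ terms only help (they never hurt) and their vanishing simply means attacked neighbors are asymptotically dropped, consistent with the other claimed properties of \textsf{RepC}.
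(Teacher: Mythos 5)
There is a genuine gap at the step you yourself flagged as the delicate one, and the fix you offer does not work. A Hajnal/Wolfowitz-type bound on the ergodic coefficient of the window product $P^{(k+m-1)}\cdots P^{(k)}$ requires a lower bound, \emph{uniform in $k$}, on the nonzero entries that realize connectivity among the regular nodes — i.e., on the off-diagonal coupling weights, not just the diagonal. The scheme~\eqref{eq:id} gives you no such bound: the confidence floor is $\varepsilon^{k+1}$, which decays geometrically, and the normalization step forces at least one neighbor (the one attaining the $\fmin$) down to that floor at \emph{every} iteration, so even regular neighbors' weights cannot be bounded away from zero a priori. The fact that $c_{iv}^{(k)}\to 1$ for regular neighbors is only established later (Lemma~\ref{lemma:rep}) and its proof invokes convergence, so it cannot be used here without circularity. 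A uniformly positive diagonal alone does not bound the ergodic coefficient away from one: with couplings allowed to vanish geometrically, the spread $V^{(k)}$ need not contract at a $k$-independent rate (and need not tend to zero at all if the couplings are summable), so the claimed $V^{(k+m)}\le\rho V^{(k)}$ with fixed $\rho<1$, and with it the geometric summability and the Cauchy conclusion, is unsupported. The same objection applies to your treatment of the regular subsystem with the attacked states as a convergent exogenous input: sub-stochasticity plus a positive diagonal is not by itself a convergence argument. A further symptom of the problem is that in your argument the hypothesis $|\mathcal N_i|>2$ only plays a nondegeneracy role, whereas it must enter quantitatively.

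The paper's proof takes a different and more elementary route that avoids any lower bound on the weights: it bounds successive differences directly, showing that the reputation increments $|c_{ij}^{(k+1)}-c_{ij}^{(k)}|$ are controlled (up to a factor $2$) by the state increments $\max_j|x_j^{(k)}-x_j^{(k-1)}|$, and then that $\|x^{(k+1)}-x^{(k)}\|_\infty\le \frac{3}{|\mathcal N_i|+1}\,\|x^{(k)}-x^{(k-1)}\|_\infty$; the hypothesis $|\mathcal N_i|>2$ is exactly what makes this factor smaller than one, giving a Cauchy sequence for each agent (note the lemma claims only per-agent convergence, not agreement, which is deferred to Lemma~\ref{lemma:one_limit}). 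If you want to salvage your stochastic-matrix framing, you would have to prove a uniform (or at least non-summably decaying) lower bound on the regular-to-regular weights generated by~\eqref{eq:id}, which is not available at this stage of the development.
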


\begin{proof}
	$
		\left\|x^{(k+1)}-x^{(k)}\right\|_\infty = \displaystyle\max_{i}\left|x_i^{(k+1)}-x_i^{(k)}\right| 
	$		
		 and, hence, assuming without loss of generality that $\|c_i^{(k)}\|_1\leq\|c_i^{(k+1)}\|_1$	
	{\small
	\begin{align*}\label{eq:proof1}
		\left|x_i^{(k+1)}-x_i^{(k)}\right|  = \left|\frac{c_i^{(k+1)}\cdot x^{(k)}}{\|c_i^{(k+1)}\|_1}-\frac{c_i^{(k)}\cdot x^{(k-1)}}{\|c_i^{(k)}\|_1}\right| \\
		 = \left| \frac{c_i^{(k+1)}\cdot x^{(k)}}{\|c_i^{(k+1)}\|_1}-\frac{c_i^{(k)}\cdot x^{(k)}}{\|c_i^{(k+1)}\|_1}+\frac{c_i^{(k)}\cdot x^{(k)}}{\|c_i^{(k+1)}\|_1}
		-\frac{c_i^{(k)}\cdot x^{(k-1)}}{\|c_i^{(k)}\|_1} \right|\\
		\leq \left| \frac{c_i^{(k+1)}\cdot x^{(k)}}{\|c_i^{(k+1)}\|_1}-\frac{c_i^{(k)}\cdot x^{(k)}}{\|c_i^{(k+1)}\|_1}+\frac{c_i^{(k)}\cdot x^{(k)}}{\|c_i^{(k)}\|_1}
		-\frac{c_i^{(k)}\cdot x^{(k-1)}}{\|c_i^{(k)}\|_1} \right|\\
		= \left| \frac{c_i^{(k+1)}-c_i^{(k)}}{\|c_i^{(k+1)}\|_1}\cdot x^{(k)}+c_i^{(k)}\cdot\frac{x^{(k)}-x^{(k-1)} }{\|c_i^{(k)}\|_1} \right|\\
	\leq \frac{\max_{j\in\mathcal N_i}|c_{ij}^{(k+1)}-c_{ij}^{(k)}|}{\|c_i^{(k+1)}\|_1}+\frac{\max_{j\in\mathcal N_i}|x_{j}^{(k)}-x_{j}^{(k-1)}|}{\|c_i^{(k)}\|_1}\\
	\leq \frac{\max_{j\in\mathcal N_i}|c_{ij}^{(k+1)}-c_{ij}^{(k)}|}{\|c_i^{(k)}\|_1}+\frac{\max_{j\in\mathcal N_i}|x_{j}^{(k)}-x_{j}^{(k-1)}|}{\|c_i^{(k)}\|_1},
	\end{align*} 
	}
	because we are assuming that $x_v^{(m)},c_{ij}^{(m)}\in]0,1[$. 
	Now we need to compute $\max_{j\in\mathcal N_i}|c_{ij}^{(k+1)}-c_{ij}^{(k)}|$.
	First, we notice that we cannot have that $\max_{j\in\mathcal N_i}|c_{ij}^{(k+1)}-c_{ij}^{(k)}|=|\varepsilon^{k+1}-\varepsilon^{k}|$, because there is always a $j\in\mathcal N_i$ such that $c_{ij}^{(k+1)}>\varepsilon^{k+1}$ and all the other $k\neq j\in\mathcal N_i$ are such that $c_{ij}^{(k+1)}\geq\varepsilon^{k+1}$.
	Therefore, we need to consider only three cases:
{\small	\begin{enumerate}
		\item $c_{ij}^{(k+1)}=\varepsilon^{k+1}$ and $c_{ij}^{(k)}=\tilde\tilde c_{ij}^{(k)}$;
		\item $c_{ij}^{(k+1)}=\tilde\tilde c_{ij}^{(k+1)}$ and $c_{ij}^{(k)}=\varepsilon^k$;
		\item $c_{ij}^{(k+1)}=\tilde\tilde c_{ij}^{(k+1)}$ and $c_{ij}^{(k)}=\tilde\tilde c_{ij}^{(k)}$.
	\end{enumerate}}
	For case $1)$ we have that 
	{\small$
		\left|c_{ij}^{(k+1)}-c_{ij}^{(k)}\right| = \left|\varepsilon^{k+1} - \tilde\tilde c_{ij}^{(k)} \right| 
		 < \left|\,\tilde\tilde c_{ij}^{(k+1)} - \tilde\tilde c_{ij}^{(k)} \right|,
	$}
	since {\small $c_{ij}^{(k+1)}=\varepsilon^{k+1}$} implies that {\small $\tilde\tilde c_{ij}^{(k+1)}\leq 0$}.
	Using the same reasoning, for case $2)$, we have that 	
	{\small$
		\left|c_{ij}^{(k+1)}-c_{ij}^{(k)}\right| = \left|\, \tilde\tilde c_{ij}^{(k+1)}-\varepsilon^{k} \right| 
		 < \left|\,\tilde\tilde c_{ij}^{(k+1)} - \tilde\tilde c_{ij}^{(k)} \right|.
	$}
	We only need to compute $3)$
{\small	\begin{equation}\label{eq:proof2}
	\begin{split}	
		\left|\,\tilde\tilde c_{ij}^{(k+1)}-\tilde\tilde c_{ij}^{(k)}\right| = \frac{\left|\tilde c_{ij}^{(k+1)}-\tilde c_{ij}^{(k)}\right|}{\max_{v\in\bar{\mathcal N_i}}\tilde c_{iv}^{(k+1)}-\fmin_{v\in\bar{\mathcal N_i}}\tilde c_{iv}^{(k+1)}}\\
		\leq |\tilde c_{ij}^{(k+1)}-\tilde c_{ij}^{(k)}|\\
		= \frac{1}{|\bar{\mathcal N_i}|}\sum_{v\in \bar{\mathcal N_i} }\left(|x_j^{(k)}-x_v^{(k)}|-|x_j^{(k-1)}-x_v^{(k-1)}|\right)\\
		\leq  \frac{1}{|\mathcal N_i|}|\mathcal N_i|\left(|x_j^{(k)}-x_\alpha^{(k)}|-|x_j^{(k-1)}-x_\alpha^{(k-1)}|\right)\\
		=   \left(|x_j^{(k)}-x_\alpha^{(k)}| -|x_{\alpha}^{(k-1)}-x_{j}^{(k-1)}|\right.\\
		 +\left.|x_{\alpha}^{(k-1)}-x_{j}^{(k-1)}|  -|x_j^{(k-1)}-x_\alpha^{(k-1)}|\right)\\
		\leq  \left(|x_j^{(k)}-x_j^{(k-1)}|+|x_\alpha^{(k)}-x_\alpha^{(k-1)}|\right)
		 \leq   2 \displaystyle\max_{j\in\bar{\mathcal N_i} }|x_j^{(k)}-x_{j}^{(k-1)}|,
	\end{split}  
	\end{equation}}
	
	\noindent where $\alpha=\displaystyle\underset{v\in\bar{\mathcal N_i}}{\arg\max} \left(|x_j^{(k)}-x_v^{(k)}|-|x_j^{(k-1)}-x_v^{(k-1)}|\right)$. 
	Now, plugging~\eqref{eq:proof2} in~\eqref{eq:proof1}, we have that
	\[
	{\small
	\begin{split}
		\|x^{(k+1)}-x^{(k)}\|_\infty \leq \frac{3}{\|c_i^{(k))}\|_1}\max_{j\in\bar{\mathcal N_i}}|x_j^{(k)}-x_{j}^{(k-1)}|\\
		\leq \frac{3}{\|c_i^{(k)}\|_1}\|x^{(k)}-x^{(k-1)}\|_\infty\leq \frac{3}{|\bar{\mathcal N_i}|}\|x^{(k)}-x^{(k-1)}\|_\infty
		\end{split}
		}
	\]
	Therefore, the iterative scheme converges whenever $\frac{3}{|\mathcal N_i|+1}<1$, which is equivalent to $|\mathcal N_i|>2$. 
\end{proof}

In fact, the requirement about the number of neighbors, i.e., each agent having more than 2 neighbors, agrees with the intuition. A regular agent should assess to, at least, 3 states to distinguish whether nodes are following the consensus update rule or not. 
Otherwise, if there are only 2 neighbors then their reputation can be, for instance, alternating between iterations. 


The previous result states that \textsf{RepC} converges (i.e. each agent converges to a state) but it is still missing to show that each regular agent converges to the same value, i.e., all regular agents  \emph{agree}.  
The next lemma assesses that: (i) either an agent $v$ converges to a unique value; (ii) or for any other agent, the reputation of agent $v$ is zero, i.e. $c_{uv}^\infty=0$.

\begin{lemma}\label{lemma:one_limit}
	Consider the iterative scheme~\ref{eq:id}.
	For any agent $j\in\mathcal V$ one of the following holds:
	\begin{enumerate}
		\item[$(i)$] $\displaystyle\lim_{k\to\infty}x_j^{(k)}=x^\infty$;
		\item[$(ii)$] $\displaystyle\forall_{i\in\mathcal N_j}\lim_{k\to\infty}c_{ij}^{(k)}=0$ (neighbors of $j$ assign it reputation zero).
	\end{enumerate}
\end{lemma}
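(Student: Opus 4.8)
The plan is to pass to the limit in the update rule and run a maximum--principle argument. By Lemma~\ref{lemma:conv} every state sequence $(x_j^{(k)})_k$ converges, say $x_j^{(k)}\to x_j^\infty$ (the attacked agents converge by the standing assumption that they broadcast a convergent signal). Each pre-normalized reputation $\tilde c_{ij}^{(k+1)}$ is a continuous function of $x^{(k)}$, hence converges; the normalized reputations are more delicate, since the $\fmin$-based denominator may degenerate in the limit, so rather than asserting convergence of $c_{ij}^{(k)}$ I would pass, by compactness, to a subsequence $(k_t)$ along which every $c_{ij}^{(k_t)}$ converges to some $\bar c_{ij}\in[0,1]$, arranging in addition that $\bar c_{ij}=\limsup_k c_{ij}^{(k)}$ for whatever pair $(i,j)$ is of current interest. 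Taking $k=k_t\to\infty$ in the state update of a regular agent $i$ then gives that $x_i^\infty$ is a convex combination of the limits $x_j^\infty$ of those proper neighbours $j$ with $\bar c_{ij}>0$; here I use that some proper neighbour always receives reputation $1$ through the normalization, so the self-term (if present) cancels and $x_i^\infty$ genuinely lies in the convex hull of the limits of the neighbours that $i$ does not asymptotically discard.

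Call an agent $j$ \emph{active} if it violates alternative $(ii)$, i.e. some regular neighbour keeps $\limsup_k c_{ij}^{(k)}>0$; equivalently, $j$ still influences the network asymptotically. The set of active agents is non-empty, because every regular agent $i$ keeps at least one proper neighbour $j$ with $\limsup_k c_{ij}^{(k)}\ge 1/|\bar{\mathcal N_i}|>0$. It therefore suffices to prove that all active agents share one limit $x^\infty$: then active agents satisfy $(i)$ and the rest satisfy $(ii)$. Put $M=\max\{x_j^\infty:j\text{ active}\}$ and $m=\min\{x_j^\infty:j\text{ active}\}$, and suppose for contradiction that $m<M$. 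If $i$ is a regular active agent with $x_i^\infty=M$, then in the limiting convex combination above every contributing neighbour $l$ (one with $\bar c_{il}>0$) is itself active, hence $x_l^\infty\le M$; since the combination equals $M$, every contributing neighbour also has limit $M$.

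The contradiction is obtained by propagating this, which is where connectivity of the regular sub-network and the assumption $|\mathcal N_v\cap\mathcal A|<|\mathcal N_v|/2$ are used. First, no regular agent at limit $M$ can keep an attacked agent whose limit differs from $M$: such a term, having positive weight in the convex combination, would push the regular agent's limit strictly off $M$. Second, if some regular active agent had limit strictly below $M$, walking along a path of the connected regular sub-network towards a regular agent at limit $M$ produces, at the first crossing, a regular agent $i$ at limit $M$ receiving from a regular agent of strictly smaller limit; by the previous paragraph $i$ must have asymptotically discarded that neighbour, and pushing this reasoning one sees $i$ would have to discard every regular neighbour of limit below $M$ --- which the majority assumption (the bulk of $i$'s neighbourhood, and in particular the reputation-$1$ neighbour, being regular) is invoked to forbid. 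Running the symmetric argument at $m$ yields $m=M$; hence all active agents have the common limit $x^\infty:=M=m$, and the lemma follows.

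I expect the real work to be in that propagation step: because edges may carry reputation tending to zero, the asymptotic influence graph can be strictly sparser than the communication graph, and one must show --- this is exactly the role of $|\mathcal N_v\cap\mathcal A|<|\mathcal N_v|/2$ --- that a regular agent cannot asymptotically discard so many neighbours that the regular agents still influencing the system become disconnected from those attaining the extreme limit. A smaller but real nuisance is the bookkeeping around the normalized reputations --- the $\fmin$ denominator and the ``$>0$'' switch in the definition of $c_{ij}^{(k+1)}$ --- which I would keep entirely at the level of subsequences and $\limsup$s rather than claiming a clean limit $c_{ij}^\infty$.
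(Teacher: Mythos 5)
The local half of your argument is sound, and in places more careful than the paper's own treatment: passing to subsequences and working with $\limsup$'s avoids the unproved assumption (which the paper makes silently) that the limits $c_{ij}^\infty$ exist, and your observation that the denominator in the state update stays at least $1$ because the argmax neighbour always receives normalized reputation $1$ correctly legitimizes taking limits in the convex combination. The trouble is the global half, where the lemma's actual content lies, and there are two genuine gaps there. First, your extremal argument starts from ``a regular active agent $i$ with $x_i^\infty=M$'', but nothing guarantees such an agent exists: the maximum $M$ over active agents could be attained only by an attacked agent $a$ that some regular $u$ keeps with small positive asymptotic weight, in which case $x_u^\infty<M$ and the convex-combination identity at $u$ produces no contradiction at all. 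Excluding this configuration is not a fixed-point fact; it requires analysing the reputation dynamics themselves (essentially the computation the paper performs in Lemma~\ref{lemma:rep}), showing that the discrepancy-based update drives $c_{ua}^{(k)}\to 0$ whenever $x_a^\infty$ differs from the limit of the regular majority around $u$.

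Second, the propagation step that you yourself flag as ``the real work'' is precisely the missing idea, not a technicality: nothing you have established forbids a regular agent from asymptotically discarding a regular neighbour, and your supporting remark that the reputation-$1$ neighbour is regular is unjustified --- the argmax of $\tilde c_{i\cdot}^{(k)}$ can perfectly well be an attacked node broadcasting a value close to the local average. So the assertion that the majority assumption ``forbids'' $i$ from discarding every regular neighbour of limit below $M$ again needs an argument about how the normalization interacts with the more-than-half-regular neighbourhood, and as written the decisive steps are asserted rather than proved. For contrast, the paper takes a different route: it assumes the reputation limits exist, writes $x_u^\infty$ as the $c^\infty$-weighted average of its neighbours' limits, and argues by induction over the neighbourhood (and then ``transitivity'' across the network) that every neighbour kept with positive limiting reputation must share $u$'s limit; your extremal-principle plan is a legitimately different strategy, but it only becomes a proof once the two gaps above are filled.
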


\begin{proof}
	By Lemma~\ref{lemma:conv}, we have that~\eqref{eq:id} converges for each agent $i\in\mathcal V\setminus\mathcal A$.
	We just need to show that for a given node $u\in\mathcal V\setminus\mathcal A$ and for each of its neighbors $v\in\mathcal N_u$ either $(i)$ or $(ii)$ happens.
	Let $u\in\mathcal V\setminus\mathcal A$ and $v\in\mathcal N_u$, we show by induction on the number of neighbors of $u$, $|\mathcal N_u|$, that for each neighbor either its reputation is zero or it converges to the same value as $u$.
	The basis is when $|\mathcal N_u|=1$, and we have that $x^\infty_u=\frac{x^\infty_u+c^\infty_{uv}x_v^\infty}{1+c^\infty_{uv}}$. Thus, either $c^\infty_{uv}=0$, or $c^\infty_{uv}>0$ and $x_v^\infty=x_u^\infty$.
	When $|\mathcal N_u|=N+1$, we have that 
	$
x^\infty_u  = \sum_{j\in\bar{\mathcal N}_u}c_{uj}^\infty x_j^\infty\bigg/\sum_{j\in\bar{\mathcal N}_u}c_{uj}^\infty.
$
Since the reputation that $u$ assigns to itselft is $c^\infty_{uu}=1$, we can rewrite the previous expression as
{\small\begin{equation}\label{eq:step1}
x^\infty_u 	 = \frac{x_u^\infty+\sum_{j\in\mathcal N_u\setminus\{v\}}c_{uj}^\infty x_j^\infty+c_{uv}^\infty x_v^\infty}{1+\sum_{j\in\mathcal N_u\setminus\{v\}}c_{uj}^\infty+c_{uv}^\infty}
\end{equation}}
	Further, using the induction hypothesis, either $(i)$ or $(ii)$ is true for any set of $N$ neighbors of $u$. Hence, for $j\in\mathcal N_u\setminus\{v\}$ either $x_j^\infty=x^\infty$ or $c_{uj}^\infty=0$. In any of the cases, we have that   
{\small	\begin{equation}\label{eq:step2}
x_u^\infty+\sum_{j\in\mathcal N_u\setminus\{v\}}c_{uj}^\infty x_j^\infty = x^\infty\left(1+\sum_{j\in\mathcal N_u\setminus\{v\}}c_{uj}^\infty\right)
\end{equation}
}
	By replacing~\eqref{eq:step2} in~\eqref{eq:step1}, it follows that
		\[{\small
	\begin{split}x^\infty_u = x^\infty & = \frac{x^\infty\left(1+\sum_{j\in\mathcal N_u\setminus\{v\}}c_{uj}^\infty\right)+c_{uv}^\infty x_v^\infty}{\left(1+\sum_{j\in\mathcal N_u\setminus\{v\}}c_{uj}^\infty\right)+c_{uv}^\infty},\\
	\end{split}
	}
\]
implying that either $c_{uv}^\infty=0$ or  $c_{uv}^\infty>0$ and $x_v^\infty=x_u^\infty=x^\infty$.
	By transitivity, we can apply the same to each neighbor of all neighbors of $u$, and so forth. Thus, the result yields for all $i\in\mathcal V\setminus\mathcal A$.
\end{proof}

As a corollary, we have that if a regular agent using \textsf{RepC} detects a neighbor as a faulty node, then it is a faulty node. 
In other words, there are no false positives.


\begin{corollary}\label{cor:detection}
		Let $v\in\mathcal V\setminus\mathcal A$ and $u\in\mathcal V$. 
	 	By using the iterative scheme~\eqref{eq:id}, if $c_{uv}^\infty=0$ then $u\in\mathcal A$.
\end{corollary}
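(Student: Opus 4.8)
The plan is to argue by contradiction: assume $c_{uv}^\infty=0$ but $u\notin\mathcal A$, so that $u$ is a regular agent and (since $c_{uu}\equiv1$) $v$ is a genuine proper neighbour of $u$, $v\in\bar{\mathcal N}_u$. First I would pin down the limiting states. By Lemma~\ref{lemma:conv} every agent's state converges; by Lemma~\ref{lemma:one_limit}, for the regular agents it is option $(i)$ that holds (option $(ii)$ would be incompatible with reaching consensus), so $u$, $v$ and every regular proper neighbour of $u$ converge to the common value $x^\infty$, while each attacked neighbour $w\in\bar{\mathcal N}_u\cap\mathcal A$ converges to some $a_w$ by the standing assumption. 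Hence $\tilde c_{uj}^{(k)}=1-\frac1{|\mathcal N_u|}\sum_{p\in\bar{\mathcal N}_u}|x_j^{(k)}-x_p^{(k)}|$ converges for every $j\in\bar{\mathcal N}_u$.

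The core step is to show that $v$ is asymptotically the \emph{least discrepant} of $u$'s neighbours, i.e. $\tilde c_{uv}^\infty=\max_{w\in\bar{\mathcal N}_u}\tilde c_{uw}^\infty$. With $S=\bar{\mathcal N}_u\setminus\mathcal A$ and $T=\bar{\mathcal N}_u\cap\mathcal A$, passing to the limit gives $\tilde c_{uw}^\infty=1-\frac1{|\mathcal N_u|}\sum_{p\in T}|x^\infty-a_p|$ for every regular $w\in S$ — so all regular proper neighbours of $u$, $v$ among them, share this limiting reputation — whereas for every attacked $w\in T$ the reverse triangle inequality yields $\tilde c_{uv}^\infty-\tilde c_{uw}^\infty\ge\frac{|a_w-x^\infty|}{|\mathcal N_u|}\bigl(|S|-|T|\bigr)$. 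The strict-majority hypothesis on $u$ forces $|T|\le|S|$ (from $|\mathcal N_u|=1+|S|+|T|$ and $|T|<|\mathcal N_u|/2$), so the right-hand side is nonnegative and $v$ indeed attains the maximum.

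It then remains to convert ``$v$ attains the maximal un-normalized reputation'' into $c_{uv}^\infty>0$, contradicting the hypothesis. When some proper neighbour of $u$ has a limiting reputation strictly below $v$'s (the generic case, e.g. whenever some $a_w\ne x^\infty$ and $|S|>|T|$), for large $k$ the value $\fmin_{w\in\bar{\mathcal N}_u}\tilde c_{uw}^{(k)}$ stays below $\max_{w\in\bar{\mathcal N}_u}\tilde c_{uw}^{(k)}$ by a fixed positive gap while $\tilde c_{uv}^{(k)}$ sits near that maximum, so $\tilde{\tilde c}_{uv}^{(k)}$ is bounded away from $0$ and eventually $c_{uv}^{(k)}=\tilde{\tilde c}_{uv}^{(k)}\not\to0$; in the degenerate case, where all of $u$'s proper-neighbour reputations converge to a single value (which, given $|S|\ge|T|$, means effectively no discrepancy for $u$ to react to), the normalization keeps those reputations at their maximum and again $c_{uv}^\infty=1\ne0$. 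I expect this last step to be the main obstacle: $\fmin(\cdot)$ and $\max(\cdot)$ do not commute with limits, so one must quantify the eventual gap between $\fmin^{(k)}$ and $\max^{(k)}$ in terms of the separation of the limit points $\{\tilde c_{uw}^\infty\}_{w\in\bar{\mathcal N}_u}$ and carefully isolate (and dispose of) the zero-gap case. The state-limit computation and the majority counting in the first two steps are, by contrast, routine.
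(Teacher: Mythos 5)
Your route is genuinely different from the paper's: the paper gives no standalone argument for this corollary, simply reading it off the dichotomy established in Lemma~\ref{lemma:one_limit} (for a regular assessor $u$ and a neighbour $v$, either $c_{uv}^\infty=0$ or $c_{uv}^\infty>0$ and $x_v^\infty=x_u^\infty$), whereas you redo a limiting-reputation computation in the spirit of Lemma~\ref{lemma:rep}, generalized to attackers with distinct limits $a_w$, together with the majority count $|T|\le|S|$. Those two computations are correct. The difficulty is the final step, which you yourself flag as the main obstacle and do not close: passing from ``$\tilde c_{uv}^\infty$ ties the maximum of the limiting un-normalized reputations'' to ``$c_{uv}^{(k)}\not\to 0$''. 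As it stands, this is a genuine gap, not a routine verification.

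Concretely, two things break there. First, the degenerate case is dismissed with an unsupported claim: if all of $u$'s proper-neighbour reputations $\tilde c_{uw}^{(k)}$ converge to a single value, the normalization is a $0/0$ situation decided by the finite-$k$ ordering, and nothing you prove prevents $\tilde c_{uv}^{(k)}$ from being persistently the smallest along the sequence; in that event $\tilde{\tilde c}_{uv}^{(k)}\le 0$, so $c_{uv}^{(k)}=\varepsilon^{k+1}\to 0$, which is exactly the conclusion you must exclude. Second, even in your ``generic'' case the claimed fixed positive gap between $\max_{w\in\bar{\mathcal N}_u}\tilde c_{uw}^{(k)}$ and $\fmin_{w\in\bar{\mathcal N}_u}\tilde c_{uw}^{(k)}$ is not automatic, because $\fmin$ is the $f$-th smallest \emph{distinct} value, not the minimum: if fewer than $f$ of $u$'s neighbours have limiting reputation strictly below the common regular value, then at finite $k$ (where the regular reputations are distinct) the $\fmin$ lands on a regular neighbour's value and itself converges to the maximum, so the ratio is again indeterminate and ``$\tilde{\tilde c}_{uv}^{(k)}$ bounded away from $0$'' does not follow; indeed the lowest-ranked regular neighbour can then receive $\tilde{\tilde c}_{uv}^{(k)}\le 0$ at every step. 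A further soft spot is your first step: you rule out option $(ii)$ of Lemma~\ref{lemma:one_limit} for regular agents because it ``would be incompatible with reaching consensus'', but the two options are not exclusive and ruling out $(ii)$ for regular agents is essentially the no-false-positive statement being proved, so as written this is circular; you would need to invoke the paper's standing assumptions (attacker states converging to values distinct from $x^\infty$, majority of regular neighbours, connectivity of the regular subnetwork, as used in Lemma~\ref{lemma:rep}) to ground it independently.
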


The proof of Lemma~\ref{lemma:conv} also hints that half of each agent's neighbors should not be under attack, so that each normal node identifies the attacked agents correctly.  
This is expressed in the following.

\begin{lemma}\label{lemma:rep}
	Suppose that the iterative scheme~\eqref{eq:id} converges to a value different from that broadcasted by the attacked agents. If for each agent $i\in\mathcal V\setminus\mathcal A$, less than half of its neighbors are not attacked agents, i.e. $|\bar{\mathcal N_i}\cap \mathcal A|<|\bar{\mathcal N_i}\setminus \mathcal A|$, then $\displaystyle\lim_{k\to\infty}c_{ia}^{(k)}=0$, for $a\in\mathcal A$ and $\displaystyle\lim_{k\to\infty}c_{iv}^{(k)}=1$ for $v\in\bar{\mathcal N_i}\setminus\mathcal A$.
\end{lemma}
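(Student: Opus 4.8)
The plan is to split the statement into its two conclusions, the first being essentially a corollary of the results already proved and the second carrying the real work.

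For $\lim_{k\to\infty}c_{ia}^{(k)}=0$ with $a\in\mathcal A$: if $a\notin\bar{\mathcal N_i}$ this holds by the convention $c_{ia}^{(k)}=0$, so assume $a\in\bar{\mathcal N_i}\cap\mathcal A$. By Lemma~\ref{lemma:conv} every regular agent's state converges, and by Lemma~\ref{lemma:one_limit} (with the standing connectivity-and-majority assumption) the regular agents share a common limit $x^\infty$; by the technical assumption the broadcast of $a$ converges to some $x_a^\infty$. Since the hypothesis gives $x_a^\infty\neq x^\infty$, alternative $(i)$ of Lemma~\ref{lemma:one_limit} fails for $a$, so alternative $(ii)$ must hold, i.e. $\lim_{k\to\infty}c_{ia}^{(k)}=c_{ia}^\infty=0$. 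This settles the first conclusion.

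For $\lim_{k\to\infty}c_{iv}^{(k)}=1$ with $v\in\bar{\mathcal N_i}\setminus\mathcal A$, fix a regular agent $i$ and pass to the limit in the reputation-update recursion. Because $|x_u^{(k)}-x_w^{(k)}|\to 0$ whenever $u,w$ are both regular, every regular neighbour $j$ has the same limiting raw reputation $\tilde c_{ij}^{(k)}\to\rho:=1-\tfrac{1}{|\mathcal N_i|}\sum_{a\in\bar{\mathcal N_i}\cap\mathcal A}|x^\infty-x_a^\infty|$, whereas an attacked neighbour $a$ has $\tilde c_{ia}^{(k)}\to\rho_a:=1-\tfrac{1}{|\mathcal N_i|}\bigl(|\bar{\mathcal N_i}\setminus\mathcal A|\,|x^\infty-x_a^\infty|+\sum_{a'\in\bar{\mathcal N_i}\cap\mathcal A}|x_a^\infty-x_{a'}^\infty|\bigr)$. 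Applying the reverse triangle inequality $|x_a^\infty-x_{a'}^\infty|\ge|x^\infty-x_{a'}^\infty|-|x^\infty-x_a^\infty|$ termwise to the last sum yields $\rho-\rho_a\ge\tfrac{|x^\infty-x_a^\infty|}{|\mathcal N_i|}\bigl(|\bar{\mathcal N_i}\setminus\mathcal A|-|\bar{\mathcal N_i}\cap\mathcal A|\bigr)>0$, using the majority hypothesis and $x_a^\infty\neq x^\infty$; hence there is $\gamma>0$ with $\rho_a\le\rho-\gamma$ for every attacked neighbour. Consequently, for all large $k$ the maximum $M^{(k)}:=\max_{w\in\bar{\mathcal N_i}}\tilde c_{iw}^{(k)}$ is attained by a regular neighbour and $M^{(k)}\to\rho$, while every attacked raw reputation stays at distance at least $\gamma/2$ below $M^{(k)}$; since $\fmin$ never returns the maximum, one checks that $m^{(k)}:=\fmin_{w\in\bar{\mathcal N_i}}\tilde c_{iw}^{(k)}$ stays within this strictly lower cluster, so $M^{(k)}-m^{(k)}$ is bounded below by a positive constant for large $k$. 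Then for a regular neighbour $v$, $c_{iv}^{(k)}=\tilde{\tilde c}_{iv}^{(k)}=1-\dfrac{M^{(k)}-\tilde c_{iv}^{(k)}}{M^{(k)}-m^{(k)}}\to 1$, since the numerator tends to $\rho-\rho=0$ and the denominator is bounded away from $0$; in particular $\tilde{\tilde c}_{iv}^{(k)}>0$ eventually, so the $\varepsilon$-branch is never taken and indeed $c_{iv}^{(k)}=\tilde{\tilde c}_{iv}^{(k)}$, which closes the argument.

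The main obstacle is the control of $\fmin$ in that penultimate step: even though all regular neighbours have a common limiting raw reputation $\rho$, at finite $k$ their raw reputations need not coincide, so one must rule out $m^{(k)}=\fmin_{w}\tilde c_{iw}^{(k)}$ drifting up toward $\rho$ (which would make $M^{(k)}-m^{(k)}\to 0$ and could even send $\tilde{\tilde c}_{iv}^{(k)}$ negative for the regular neighbour of smallest raw reputation). This is where the strict gap $\gamma$ furnished by the majority assumption is combined with the fact that the attacked raw reputations are exactly the smallest entries of the list and with the admissible range of the parameter $f$, so that $\fmin$ necessarily selects one of them; the remaining ingredients are just continuity of the three update maps and elementary estimates. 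A minor bookkeeping point is to keep the self-reputation $c_{ii}^{(k)}\equiv 1$ out of the comparison so that $M^{(k)}$ is genuinely the regular-neighbour maximum.
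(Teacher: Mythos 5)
Your proposal is essentially correct at the paper's level of rigor, but it is organized differently and is in one respect more general. The paper proves both conclusions in a single computation: it assumes all attacked agents broadcast one common value $y$, passes to the limit in the raw reputations to get $\tilde c_{ia}^\infty = 1-\tfrac{|\mathcal N_i\setminus\mathcal A|}{|\mathcal N_i|}|y-x^\infty|$ and $\tilde c_{ij}^\infty = 1-\tfrac{|\mathcal N_i\cap\mathcal A|}{|\mathcal N_i|}|y-x^\infty|$, uses the majority hypothesis and $y\neq x^\infty$ to get the strict inequality $\tilde c_{ij}^\infty>\tilde c_{ia}^\infty$, and then simply asserts that the normalization sends regular neighbours to $1$ and attacked neighbours to $0$. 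You instead derive the first conclusion ($c_{ia}^{(k)}\to 0$) from the dichotomy of Lemma~\ref{lemma:one_limit}, which is a cleaner route than the paper's, and for the second conclusion you redo the limiting raw-reputation computation allowing distinct attacker limits $x_a^\infty$, recovering the gap $\rho-\rho_a>0$ by the reverse triangle inequality; this strictly generalizes the paper's single-$y$ argument and covers heterogeneous attackers, which the lemma statement arguably intends.

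The one step you assert rather than prove---that $\fmin$ always lands in the attacked (low) cluster, so the normalizing denominator stays bounded away from zero---is genuinely needed and is not automatic from "the admissible range of $f$": if $f$ exceeds the number of distinct raw reputations in the low cluster (e.g.\ $f=2$ with a single attacked neighbour, or two attackers broadcasting identical values), then at finite $k$ the operator $\fmin$ selects the smallest \emph{regular} raw reputation, the denominator $M^{(k)}-m^{(k)}$ tends to zero, and the lowest regular neighbour is pushed into the $\varepsilon$-branch, so $c_{iv}^{(k)}\to 1$ is no longer evident for it; closing this would require an explicit hypothesis tying $f$ to the number of distinct attacked broadcasts, or an argument about the relative rates at which the regular raw reputations coalesce. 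You should know, however, that the paper's own proof is silent on exactly this point (it only remarks that reputations are normalized between $0$ and $1$), so your argument is no weaker than the published one---and you are the only one to flag the issue---but neither text fully discharges it.
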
 

\begin{proof}
	By Lemma~\ref{lemma:conv}, we have that the each regular agent using the iterative scheme in~\eqref{eq:id} converges to $x^\infty$. Let $y$ denote the value that all the attacked agents in $\mathcal A$ share with the neighbors.
	Thus, for a regular agent $i\notin\mathcal A$, an attacked agent's reputation, $a\in\mathcal A$, satisfies 
	\[
	{\small
	\begin{split}
		\tilde c_{ia}^\infty=&\lim_{k\to\infty}\tilde c_{ia}^{(k)} = 1-\frac{1}{|\mathcal N_i|}\sum_{v\in\mathcal N_i}\left|y-\lim_{k\to\infty}x_v^{(k)}\right|
		\\= &
		 1-\frac{|\mathcal N_i\setminus\mathcal A|}{|\mathcal N_i|}\left|y-x^{\infty}\right|,
		\end{split}
		}
	\] 
	and the limit of the reputation of a regular user, $j\notin\mathcal A$, is given as
	{\small\[ 
		\tilde c_{ij}^\infty= \lim_{k\to\infty}\tilde c_{ij}^{(k)}
		 =  
		 1-\frac{|\mathcal N_i\cap\mathcal A|}{|\mathcal N_i|}\left|x^{\infty}-y\right|.
	\]	}
	Since $|\mathcal N_i\cap\mathcal A|<|\mathcal N_i\setminus\mathcal A|$ and $y\neq x^{\infty}$, then $\tilde c_{ij}^\infty>\tilde c_{ia}^\infty$, and because reputations values are normalized to be between $0$ and $1$, we have that, for all $i$, $1=c_{ij}^\infty>c_{ia}^\infty=0$. 
\end{proof}

Now, we need to show that a regular agent using \textsf{RepC}  identifies the neighbors which are attacked nodes, and to study the convergence rate of method.

\begin{lemma}\label{lemma:complete}
		Let $v\in\mathcal V\setminus\mathcal A$ and $u\in\mathcal V$. 
	 	By using the iterative scheme~\eqref{eq:id}, if $u\in\mathcal A$ and $|\mathcal A|\leq f$ then $c_{uv}^\infty=0$.
\end{lemma}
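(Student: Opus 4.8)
First, if $u\notin\mathcal N_v$ then $c_{uv}^{(k)}=0$ for all $k$ by construction, so assume $u\in\mathcal N_v$. The natural plan is to squeeze the conclusion out of Lemma~\ref{lemma:one_limit} together with the role the parameter $f$ plays in the normalization step. Apply Lemma~\ref{lemma:one_limit} to $u$: either $\lim_{k\to\infty}x_u^{(k)}=x^\infty$, or every neighbor of $u$ --- in particular $v$ --- eventually assigns $u$ reputation zero, which is precisely $c_{uv}^\infty=0$. So the second alternative closes the case; by the standing assumption the broadcast of $u$ converges, say to $y_u$, and everything now turns on whether $y_u=x^\infty$ or $y_u\neq x^\infty$.

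When $y_u\neq x^\infty$ the lemma is already delivered by the second alternative, and the place where $|\mathcal A|\le f$ is meant to do work is in re-deriving this directly, in the spirit of Lemma~\ref{lemma:rep} but without assuming that all attacked agents share one value. The argument I would run: by Lemma~\ref{lemma:conv} and the connectivity/majority hypothesis (as in the proofs of Lemmas~\ref{lemma:one_limit} and~\ref{lemma:rep}), $v$ and all regular neighbors of $v$ converge to the common value $x^\infty$; since a strict majority of $v$'s proper neighbors converge to $x^\infty$, that value is the unique median of the limiting neighbor-values $\{x_w^\infty:w\in\bar{\mathcal N_v}\}$, so the limiting un-normalized reputation $1-\frac{1}{|\mathcal N_v|}\sum_{w\in\bar{\mathcal N_v}}|x_j^\infty-x_w^\infty|$ that $v$ attaches to a neighbor $j$ is maximal exactly when $x_j^\infty=x^\infty$; hence all regular neighbors of $v$ sit at one common top level $R$ (which is the maximum), and every attacked neighbor of $v$ whose broadcast converges to a value $\neq x^\infty$ sits strictly below $R$. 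Now $|\mathcal A|\le f$ enters: at each large step $v$ has at most $f$ attacked neighbors, hence at most $f$ distinct values among the reputations it assigns them, so by the very definition of $\fmin$ --- the $f$-th smallest distinct value that is not the maximum --- the value $\fmin$ returns over $v$'s neighbor-reputations is at least the largest reputation $v$ assigns to an attacked neighbor; consequently, for every attacked neighbor $b$ of $v$ with broadcast limit $\neq x^\infty$, the normalized reputation $v$ assigns $b$ is $\le 0$ for all large $k$, $v$ resets that weight to $\varepsilon^{k+1}$, and since $\varepsilon\in(0,1)$ the weight tends to $0$. Taking $b=u$ gives $c_{uv}^\infty=0$.

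The step I expect to be the real obstacle is the remaining case $y_u=x^\infty$: there $u$ is asymptotically indistinguishable from a regular neighbor of $v$ --- its reputation tends to the same top level $R$ --- so neither Lemma~\ref{lemma:one_limit} nor the $\fmin$ bound above forces the weight $v$ places on $u$ to vanish, and with no further attacks and many regular neighbors one seems instead to get $c_{uv}^\infty\neq 0$. Handling this case cleanly appears to require either an extra assumption --- that each attacked agent broadcasts a sequence converging to a value different from the regular consensus (equivalently, strengthening alternative $(i)$ of Lemma~\ref{lemma:one_limit} in this context) --- or a delicate transient analysis showing that $u$'s reputation is eventually always among the bottom $f$ of $v$'s neighbors, so that the confidence reset $\varepsilon^{k+1}$ fires at every large step. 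The remaining steps are routine: the closing squeeze only uses that the weight $v$ places on $u$ equals the normalized reputation (which then has non-positive limit) or $\varepsilon^{k+1}$, both tending to $0$.
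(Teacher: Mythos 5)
Your proposal is correct in substance but follows a genuinely different, and far more complete, route than the printed proof. The paper's own proof of this lemma is a fragment: it announces that the attacked node's reputation ``strictly decreases with time'' and records only the triangle-inequality bound $|x_a-x_v^{(k+1)}|-|x_a-x_v^{(k)}|\le |x_v^{(k+1)}-x_v^{(k)}|$; it never uses the hypothesis $|\mathcal A|\le f$, never touches the normalization step or the $\varepsilon^{k+1}$ reset, and never actually derives $c^\infty=0$. Your argument supplies precisely the mechanism that is missing there: regular neighbors of $v$ converge to the common $x^\infty$, which by the majority assumption is the unique minimizer of $\sum_{w\in\bar{\mathcal N_v}}|\cdot-x_w^\infty|$, so regular neighbors occupy the strict top level of the un-normalized reputations while attacked neighbors with limit $\neq x^\infty$ sit strictly below; because $v$ has at most $f$ attacked neighbors, eventually $\fmin$ (which saturates at the largest distinct value below the maximum when there are fewer than $f$ such values) is at least every attacked neighbor's reputation, so their normalized reputations become non-positive, the confidence reset $\varepsilon^{k+1}$ fires at every large step, and the weights vanish. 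This is the argument the statement actually needs, and it makes explicit the roles of $f$, of $\fmin$, and of $\varepsilon$, none of which appear in the paper's proof. (You also read $c_{uv}^\infty$ as the weight $v$ places on $u$, which is the intended meaning even though it clashes with the index convention of~\eqref{eq:id}; the paper is loose here.)

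The obstacle you flag --- an attacked $u$ whose broadcast converges exactly to $x^\infty$ --- is real, and the paper's proof does not resolve it either: it is assumed away by the standing assumption that attacked agents broadcast converging values together with the hypothesis, stated explicitly in Lemma~\ref{lemma:rep} and in item (iii) of the abstract but omitted from this lemma's statement, that the attacked values differ from the regular consensus. Under that additional hypothesis your case analysis closes completely, so the gap you identify is a defect of the lemma as stated rather than of your reasoning.
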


\begin{proof}
	Let $a\in\mathcal A$ be an attacked node.
	We want to show that for a regular agent, $v\in\mathcal V\setminus\mathcal A$, the reputation of agent $a$ strictly decreases with time.
	Let $v\in\mathcal V\setminus\mathcal A$, we have that
	$
	|x_a-x_v^{(k+1)}|-|x_a-x_v^{(k)}|
	 \leq|x_v^{(k+1)}-x_v^{(k)}|.
	$
\end{proof}


\begin{proposition}
	Consider the iterative scheme in~\eqref{eq:id} and let $N=\displaystyle\min_{i\in\mathcal V}|\mathcal N_i|$ and $\lambda = \frac{3}{N+1}$. If $N>3$, then~\eqref{eq:id} converges with exponential rate and we have that $\|x^{(k+1)}-x^{(k)}\|_\infty\leq \lambda^k$. Further, to achieve an error of at most $\delta>0$ between the last two iterations, we need to run the iterative scheme at most $k=\lceil\log_{\lambda}(\delta)\rceil$ times.
\end{proposition}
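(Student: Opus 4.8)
The plan is to turn the one–step estimate already obtained inside the proof of Lemma~\ref{lemma:conv} into a clean geometric bound by induction, and then to read off both the exponential convergence and the iteration count from that bound.

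First I would isolate the contraction. The chain of inequalities in the proof of Lemma~\ref{lemma:conv} establishes that, for every agent $i\in\mathcal V$ and every $k\ge 1$,
\[
\left|x_i^{(k+1)}-x_i^{(k)}\right|\le \frac{3}{|\mathcal N_i|+1}\,\left\|x^{(k)}-x^{(k-1)}\right\|_\infty .
\]
Taking the maximum over $i$ and using $N=\min_{i\in\mathcal V}|\mathcal N_i|$ gives the per–step contraction $\|x^{(k+1)}-x^{(k)}\|_\infty\le \lambda\,\|x^{(k)}-x^{(k-1)}\|_\infty$ with $\lambda=\tfrac{3}{N+1}$; the hypothesis $N>3$ guarantees $0<\lambda<1$ (in fact $\lambda\le\tfrac35$), which is all the argument needs. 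I would note here that this contraction is already valid from $k=1$ on, since the case analysis for $c_{ij}^{(k+1)}-c_{ij}^{(k)}$ in that proof only uses the post–initialisation form of the reputations.

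Next I would run the induction on $k$. For the base case, the standing normalisation $x_i^{(k)}\in[0,1]$ forces $\|x^{(1)}-x^{(0)}\|_\infty\le 1=\lambda^0$. Assuming $\|x^{(k)}-x^{(k-1)}\|_\infty\le\lambda^{k-1}$, the contraction gives $\|x^{(k+1)}-x^{(k)}\|_\infty\le\lambda\cdot\lambda^{k-1}=\lambda^{k}$, which is the claimed bound. For the exponential rate, I would then chain consecutive gaps: for any $m>k$,
\[
\left\|x^{(m)}-x^{(k)}\right\|_\infty\le\sum_{j=k}^{m-1}\left\|x^{(j+1)}-x^{(j)}\right\|_\infty\le\sum_{j=k}^{\infty}\lambda^{j}=\frac{\lambda^{k}}{1-\lambda},
\]
so $(x^{(k)})_k$ is Cauchy, converges to the limit $x^\infty$ already supplied by Lemma~\ref{lemma:conv}, and $\|x^{(k)}-x^\infty\|_\infty\le\lambda^{k}/(1-\lambda)$ — an exponential rate.

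Finally, for the iteration count I would simply solve $\lambda^{k}\le\delta$. Since $\|x^{(k+1)}-x^{(k)}\|_\infty\le\lambda^{k}$, making $\lambda^{k}\le\delta$ suffices; taking logarithms and dividing by $\log\lambda<0$ (which reverses the inequality) this holds as soon as $k\ge\log_\lambda\delta$, so $k=\lceil\log_\lambda\delta\rceil$ iterations are enough (for $\delta<1$, so that $\log_\lambda\delta>0$). There is no genuine analytic obstacle: the proposition is essentially a corollary of Lemma~\ref{lemma:conv}, and the only points that need care are bookkeeping — that the contraction applies already at $k=1$ so the chain bottoms out at $\|x^{(1)}-x^{(0)}\|_\infty$, that the base case uses the $[0,1]$ normalisation, and that the sign of $\log\lambda$ is handled correctly when inverting $\lambda^{k}\le\delta$ (with at most an off‑by‑one depending on how one counts ``iterations'').
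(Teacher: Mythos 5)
Your proposal is correct and follows essentially the same route as the paper's proof: invoke the per-step contraction $\frac{3}{|\bar{\mathcal N_i}|}\le\lambda$ established inside Lemma~\ref{lemma:conv}, iterate it down to $\|x^{(1)}-x^{(0)}\|_\infty\le 1$ via the $[0,1]$ normalisation to get $\|x^{(k+1)}-x^{(k)}\|_\infty\le\lambda^k$, and then solve $\lambda^k\le\delta$ for the iteration count. If anything, your write-up is slightly more careful than the paper's (explicit induction, the Cauchy/telescoping step, and the correct reversal of the inequality when dividing by $\log\lambda<0$, where the paper's displayed chain is loose), but these are refinements of the same argument, not a different approach.
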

	
\begin{proof}		
	Let $N=\displaystyle\min_{i\in\mathcal V}|\mathcal N_i|$ and $\lambda = \frac{3}{N+1}$.
	Using the proof of Lemma~\ref{lemma:conv}, we have that 
	$	
		\|x^{(k+1)}-x^{(k)}\|_\infty \leq  \frac{3}{|\bar{\mathcal N_i}|}\|x^{(k)}-x^{(k-1)}\|_\infty
		 \leq \lambda^k\|x^{(1)}-x^{(0)}\|_\infty\leq \lambda^k.
	$
	Hence, the iterative scheme converges with exponential rate of $\lambda^k$.
	To achieve an error between iterations of at most $\varepsilon$, we need to have that 
	$
		\lambda^k\leq\delta,
	$
	which is equivalent to have that
	$
		k\leq\log_{\lambda}(\delta)\leq \lceil\log_{\lambda}(\delta)\rceil.
	$
	Therefore, if we run the iterative scheme at most $\lceil\log_{\lambda}(\delta)\rceil$ times, we obtain an error between the last two iterations of at most $\varepsilon$.
\end{proof}


\subsection{Complexity Analysis}\label{sub:complexity}

Next, we investigate the complexity analysis of the proposed algorithm \textsf{RepC}, when the network communication is synchronous.
 
\begin{proposition}\label{prop:complexity}
	Let $\mathcal G=(\mathcal V,\mathcal E )$ be a network of agents, $l=\displaystyle\max_{v\in\mathcal V}|\mathcal N_v|$, then, for $i$ iterations and for each agent, the iterative scheme~\eqref{eq:id} has time complexity of $\mathcal O(l^2i)$.
\end{proposition}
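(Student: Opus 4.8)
The plan is to count, for an arbitrary fixed agent and a single update $k \to k+1$, the number of elementary arithmetic and comparison operations performed by each of the four steps of~\eqref{eq:id} (working in the standard unit-cost arithmetic model), and then to multiply the resulting per-iteration cost by the number of iterations. Fix such an agent and let $d$ denote its in-degree; by definition of $l$ we have $d \le l$, so every sum or extremum ranging over the agent's neighbor set $\mathcal N$ (or $\bar{\mathcal N}$) involves at most $l$ terms.

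First I would bound the reputation update: for each of the $\le l$ neighbors $j$, the value $\tilde c_{\cdot j}^{(k+1)}$ is one minus an average of at most $l$ absolute differences $|x_j^{(k)} - x_v^{(k)}|$, costing $\mathcal O(l)$ operations, and so $\mathcal O(l^2)$ in total; this will be the dominant term. The confidence-thresholding step then touches the $\le l$ normalized reputations entrywise at $\mathcal O(1)$ each, i.e. $\mathcal O(l)$, and the consensus state update is a ratio of two sums over $\bar{\mathcal N}$, that is, two length-$\le l$ inner products, again $\mathcal O(l)$.

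The one step that is not a one-line count is the normalized reputation update, since it requires both $\max_{v\in\bar{\mathcal N}} \tilde c_{\cdot v}^{(k+1)}$ and $\fmin_{v\in\bar{\mathcal N}} \tilde c_{\cdot v}^{(k+1)}$. The maximum of at most $l$ reals is $\mathcal O(l)$. For $\fmin$, whose inductive definition might at first suggest something more costly, I would observe that it is simply the $f$th smallest among the \emph{distinct} entries that lies strictly below the maximum; hence it is obtained by sorting the $\le l$ values in $\mathcal O(l\log l)$ time and scanning the sorted list once (a linear-time selection would also do), so its cost is $\mathcal O(l\log l) \subseteq \mathcal O(l^2)$. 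Once the denominator $\max - \fmin$ is available, each of the $\le l$ normalized reputations is $\mathcal O(1)$, adding $\mathcal O(l)$, so this step is $\mathcal O(l^2)$.

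Summing the four contributions gives $\mathcal O(l^2)+\mathcal O(l^2)+\mathcal O(l)+\mathcal O(l) = \mathcal O(l^2)$ for one iteration at the chosen agent; since the agent was arbitrary, the same bound holds for every agent, and running~\eqref{eq:id} for $i$ iterations yields a per-agent cost of $\mathcal O(l^2 i)$, as claimed. The only place where a small argument is needed is the $\fmin$ evaluation just discussed; everything else is a direct accounting, and is in any case dominated by the $\mathcal O(l^2)$ cost of the reputation update.
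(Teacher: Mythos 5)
Your proposal is correct and follows essentially the same route as the paper: a per-step operation count for a fixed agent, with the $\mathcal O(l^2)$ cost of the reputation update (the pairwise absolute differences over at most $l$ neighbors) as the dominant term, and multiplication by the number of iterations $i$ to get $\mathcal O(l^2 i)$ per agent. The only difference is that you treat the $\fmin$ evaluation explicitly via sorting in $\mathcal O(l\log l)$, whereas the paper simply charges $\mathcal O(|\mathcal N_v|)$ to each remaining step; either way that cost is absorbed by the $\mathcal O(l^2)$ term, so the conclusion is unaffected.
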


\begin{proof}
	Given a network of agents $\mathcal G=(\mathcal V,\mathcal E )$, for time step $k$ and for an agent $v$, the time complexity of~\eqref{eq:id} is the sum of the time complexities of computing $\tilde c_{vu}^{(k)}$,  $\tilde\tilde c_{vu}^{(k)}$, $ c_{vu}^{(k)}$, for each $u\in\mathcal N_i$, and $x_v^{(k)}$.
	Computing $\tilde c_{vu}^{(k)}$ has computation complexity of $\mathcal O(|\mathcal N_v|^2)$, because there are $\mathcal O(|\mathcal N_v|^2)$ pairs of neighbors values to compute the absolute difference.
	Each of the remaining steps has time complexity of $\mathcal O(|\mathcal N_v|)$.
	Hence, the sum of each step time complexity is $O(|\mathcal N_v|^2)+4\times\mathcal O(|\mathcal N_v|)=O(|\mathcal N_v|^2)$.
	Thus, if $l=\displaystyle\max_{v\in\mathcal V}|\mathcal N_v|$, then $\mathcal O(l^2)$ is a bound for the time complexity that each incurs.
	Therefore, for $i$ iterations of the iterative scheme~\eqref{eq:id}, each agent incurs in $\mathcal O(l^2 i)$ time complexity.	
\end{proof}

%
%
%
%
%
%


\section{Illustrative Examples}\label{sec:ill_exam}

Subsequently, we illustrate the use of  \textsf{RepC} for different kinds of attacks. 
Further, in the examples, we use $\varepsilon=0.1$.

\subsection{Same value for attacked nodes}

In the following examples, we consider the network of agents depicted in Fig.~\ref{fig:GA}~(a).

\begin{figure}[h!] 
\centering      
\subfigure[Network of agents $\mathcal G_A$.]{\label{fig:a}\includegraphics[width=0.42\textwidth]{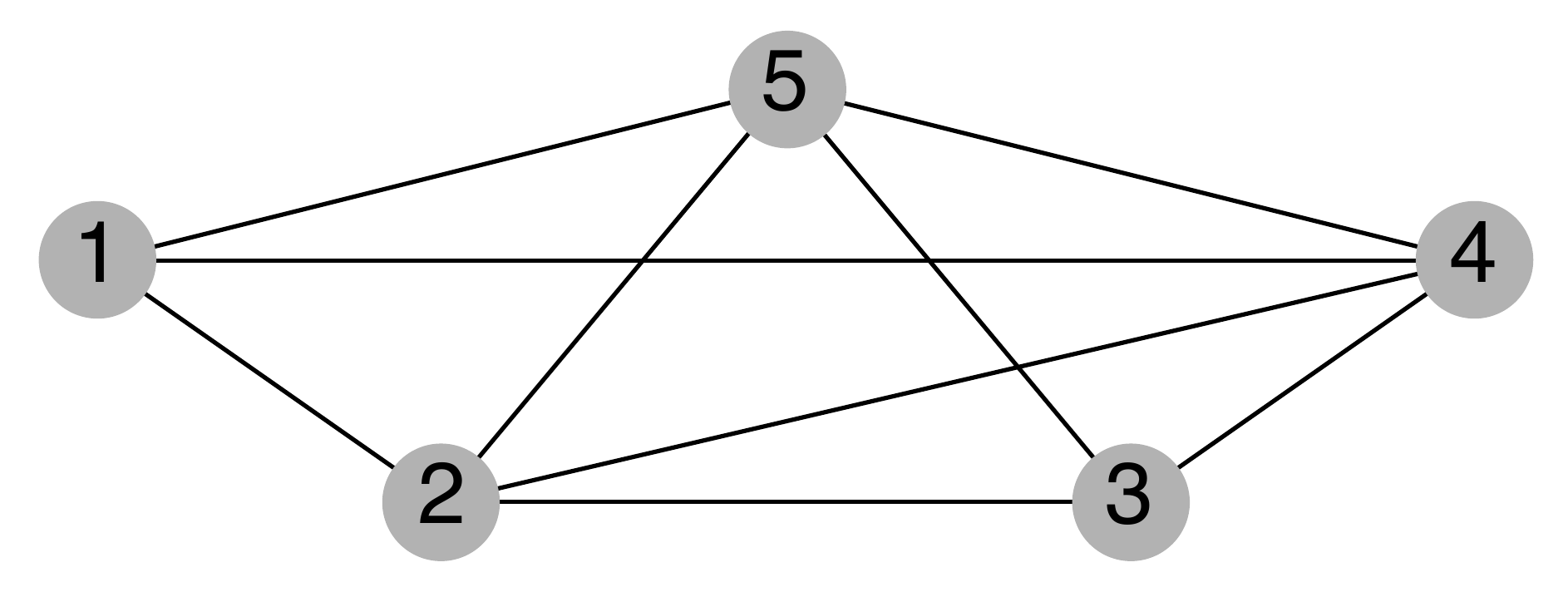}}
\subfigure[Network of agents $\mathcal G_B$.]{\label{fig:b}\includegraphics[width=0.38\textwidth]{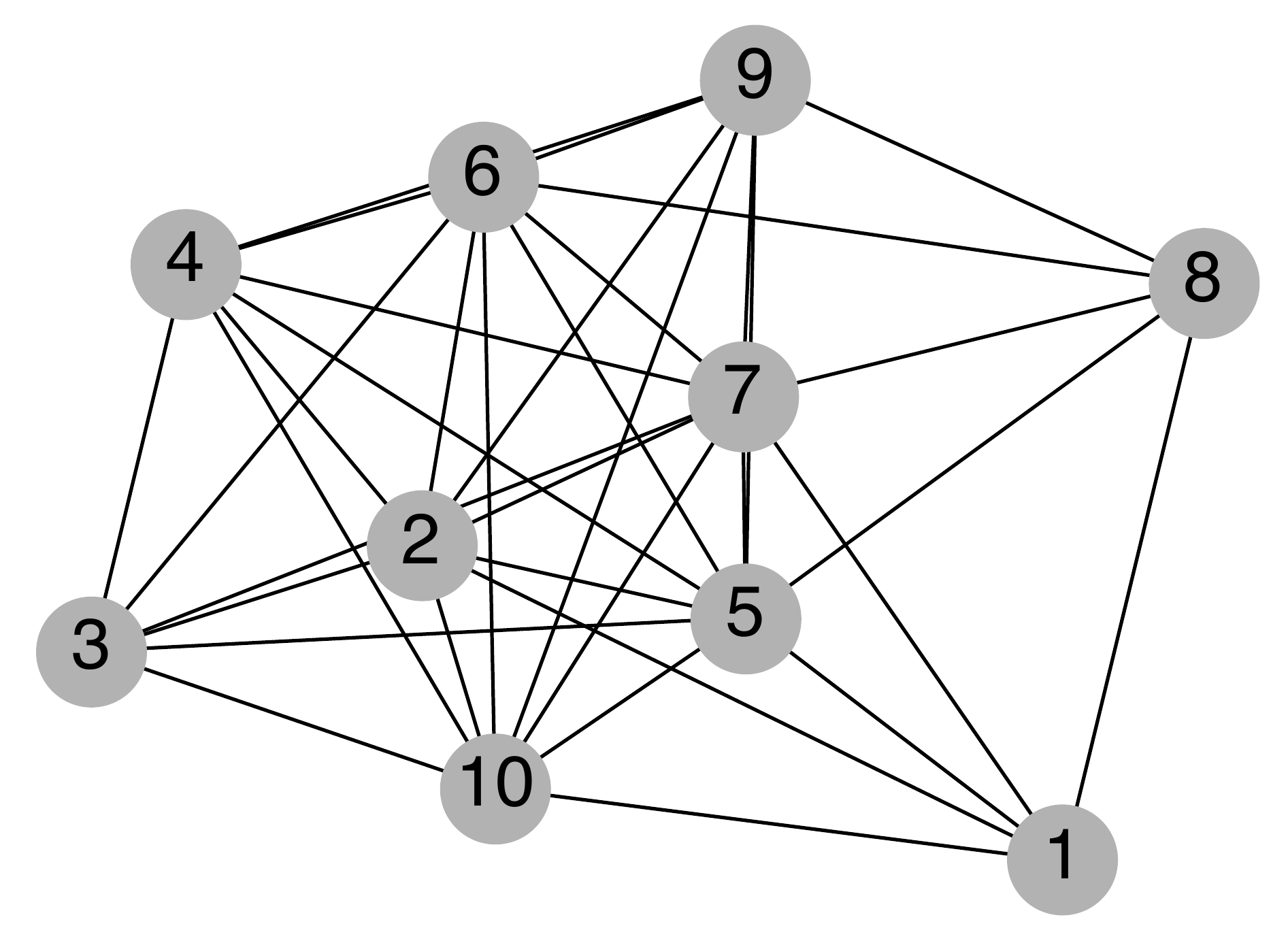}}
\caption{}
\label{fig:GA} 
\end{figure}



First, we illustrate algorithm \textsf{RepC} in the scenario of a network of agents \textbf{without attacked nodes}. The set of agents is $\mathcal V_1=\{1,\hdots,5\}$ and, thus, the set of attacked agents is $\mathcal A=\emptyset$. We set the parameter $f=1$. 
Figure~\ref{fig:C0} depicts the state evolution of each agent.

\begin{figure}[h!] 
\centering      
\includegraphics[width=0.61\textwidth]{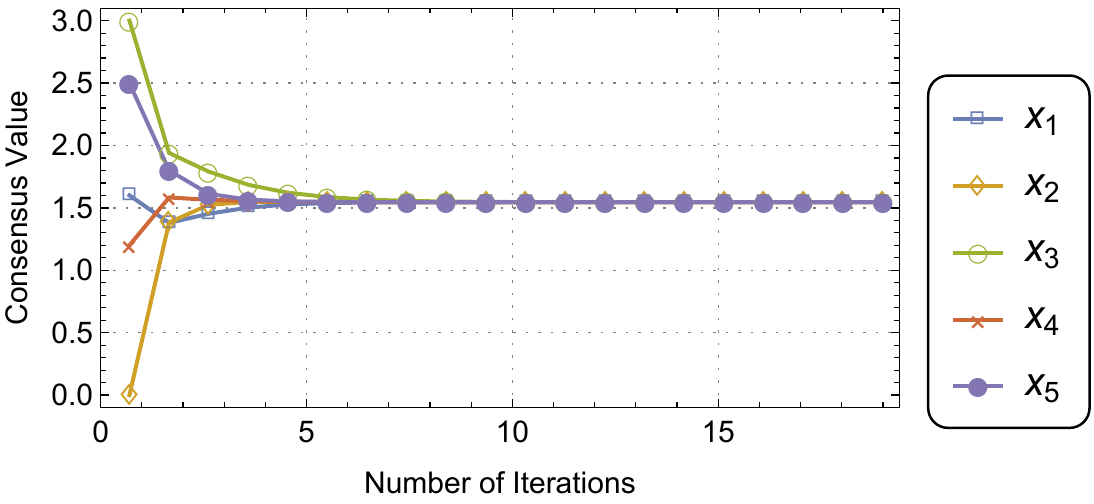}
\caption{Consensus of network $\mathcal G_A$ with agents $\mathcal V_1$ and set of attacked agents $\emptyset$.}
\label{fig:C0} 
\end{figure}


%
%



Here, we explore the scenario where an \textbf{attacker} targets one agent to \textbf{share a value close to the consensus} of the network of regular agents, depicted in Fig.~\ref{fig:GA}.

\begin{figure}[h!] 
\centering      
\includegraphics[width=0.61\textwidth]{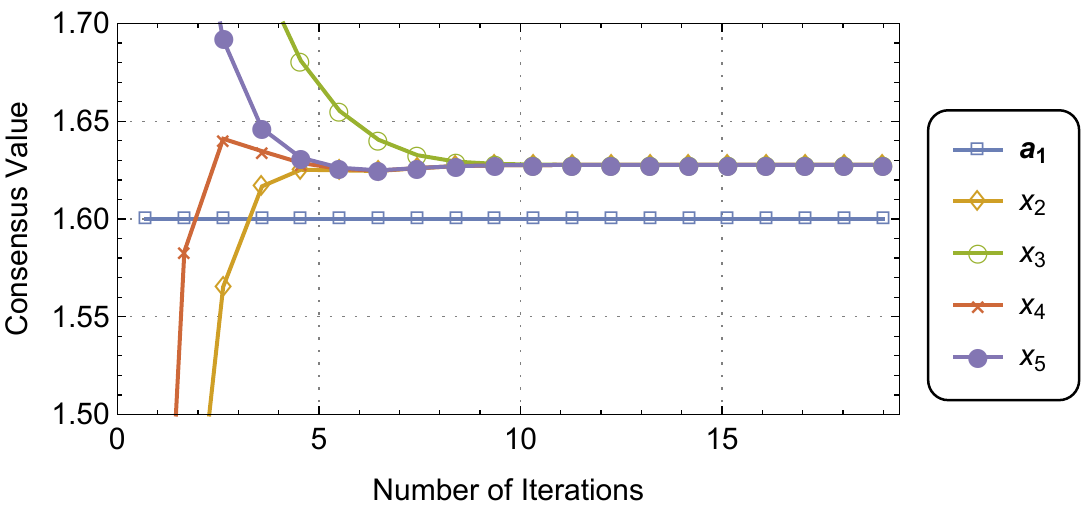}
\caption{Consensus of network $\mathcal G_A$ with agents $\mathcal V_1$ and set of attacked agents $\mathcal A_1$ (zoomed around the attacker's value.)}
\label{fig:C1} 
\end{figure}

The set of agents is $\mathcal V=\{1,\ldots,5\}$ and the set of attacked agents is $\mathcal A=\{1\}$. 
Figure~\ref{fig:C1} depicts each agent consensus value. 
We can see that although the attacker value is very close to the consensus value, the neighbors of the attacked node assign zero to its reputation, by using~\eqref{eq:id}. Hence the value that the attacked node shares is discarded.

Next, in Figure~\ref{fig:c2} -- Fig.~\ref{fig:c3}, we depict the evolution of the reputations that agents $2$ and $3$ assign to their neighbors. 

\begin{figure}[h!] 
\centering      
\includegraphics[width=0.61\textwidth]{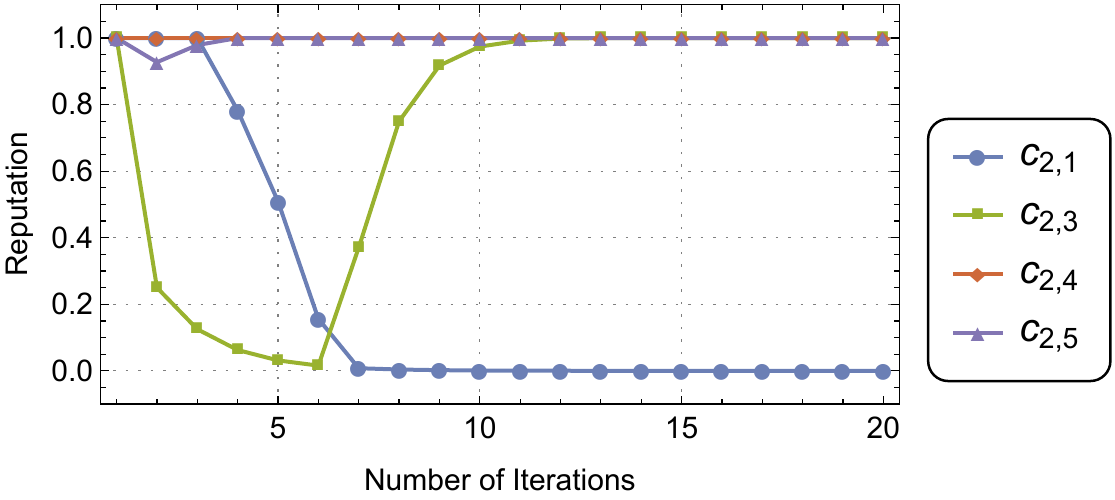}
\caption{Evolution of the reputations that agent $2$ assigns to each of its neighbors.}
\label{fig:c2} 
\end{figure}

\begin{figure}[h!] 
\centering      
\includegraphics[width=0.61\textwidth]{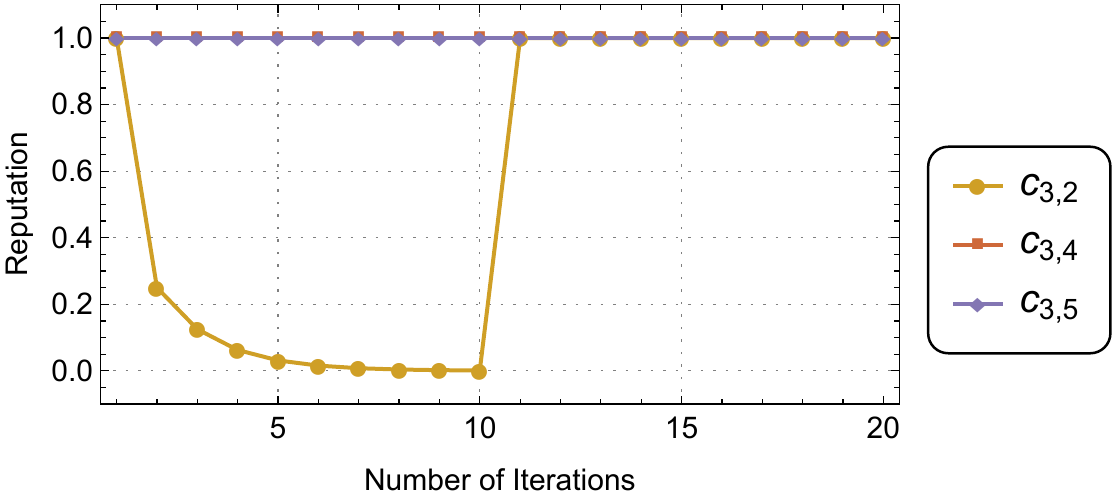}
\caption{Evolution of the reputations that agent $3$ assigns to each of its neighbors.}
\label{fig:c3} 
\end{figure}

%

\subsection{Different values for attacked nodes}

Next, we illustrate the scenario where attacked nodes share different values. 
For that end, we consider the set of agents $\mathcal V=\{1,\ldots,10\}$, with $\mathcal A=\{1,8\}$, and the network of agents depicted in Fig.~\ref{fig:GA}~(b).
We explore two scenarios with two attacked agents: (i) both attacked nodes share values (distinct) smaller than the consensus, see Fig.~\ref{fig:C2}; (ii) one attacked node shares a value larger than the consensus while the other uses a smaller value than the consensus, see Fig.~\ref{fig:C3}.

\begin{figure}[h!] 
\centering      
\includegraphics[width=0.61\textwidth]{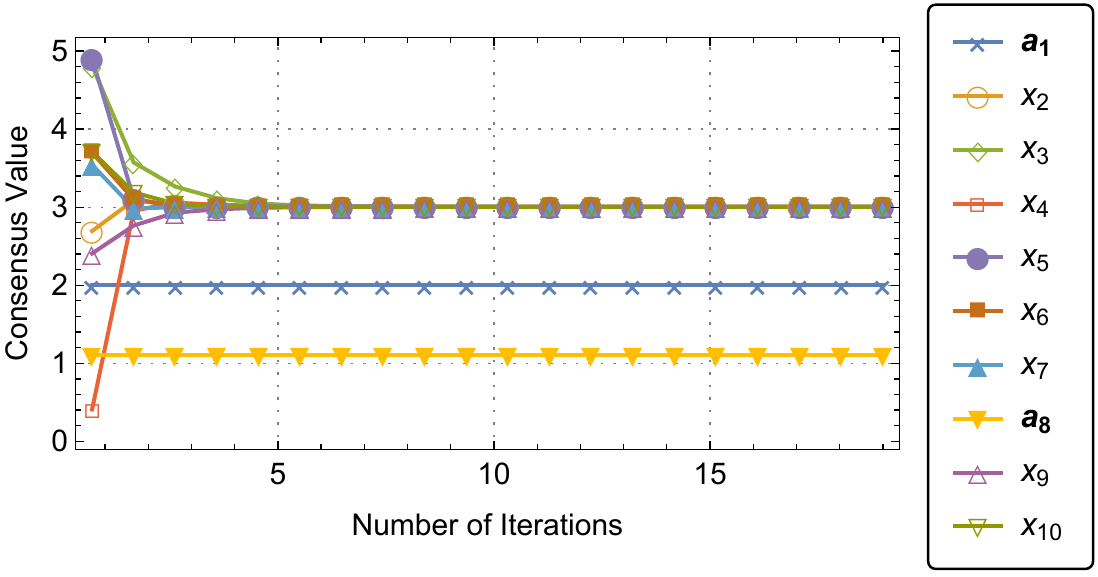}
\caption{Consensus of network $\mathcal G_B$ with agents $\mathcal V_3$ and set of attacked agents $\mathcal A_3$}
\label{fig:C2} 
\end{figure}
\begin{figure}[h!] 
\centering      
\includegraphics[width=0.61\textwidth]{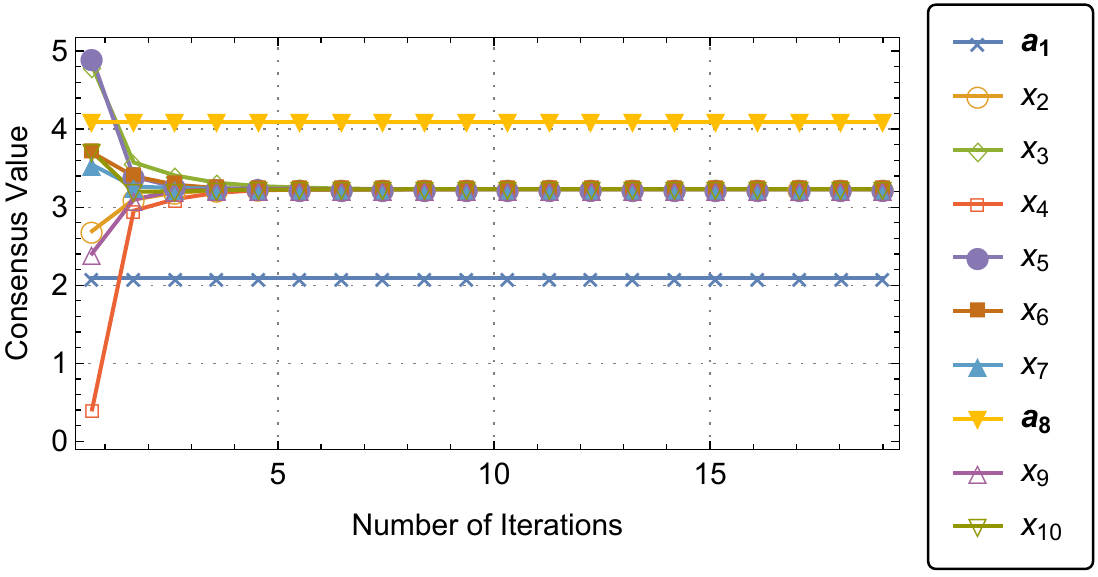}
\caption{Consensus of network $\mathcal G_B$ with agents $\mathcal V_2$ and set of attacked agents $\mathcal A_2$}
\label{fig:C3} 
\end{figure}

\subsection{Asynchronous Communication}

We, now, illustrate the use of algorithm \textsf{RepC} in the case where the communication between nodes occur asynchronously. To simulate this scenario, at each time instance, a random subset of agents communicates. 
The set of agents is $\mathcal V=\{1,\ldots,5\}$, the network of agents is $\mathcal G_A$, and the set of attacked agents is $\mathcal A=\{1\}$. 
Figure~\ref{fig:repConAsyn} depicts the state evolution of each agent when using the asynchronous version of algorithm \textsf{RepC}. Each normal node identifies and discards the information of the attacked agent.

\begin{figure}[h!] 
\centering      
\includegraphics[width=0.61\textwidth]{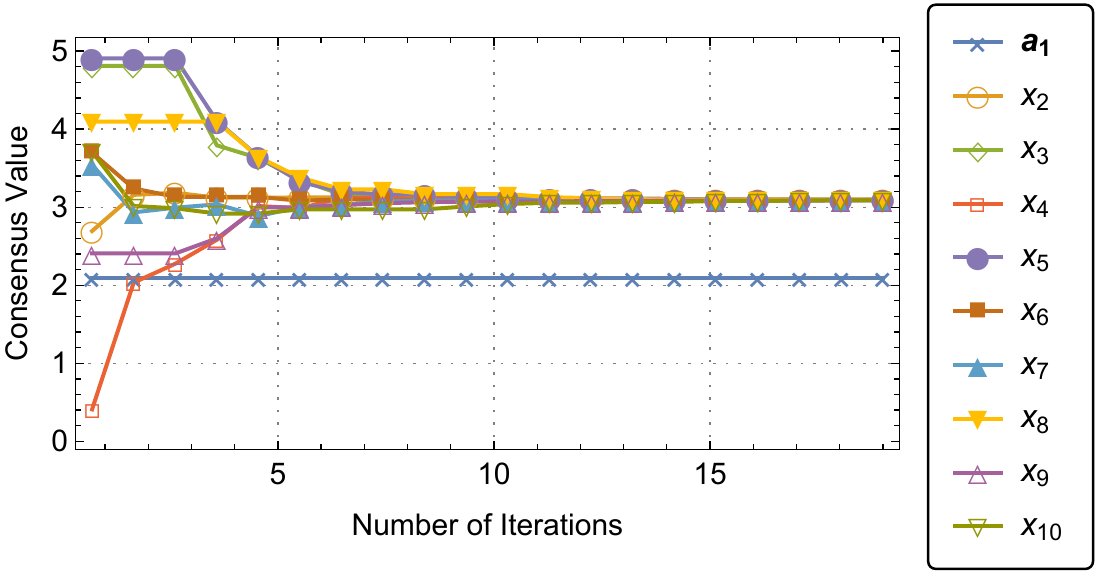}
\caption{Consensus of network $\mathcal G_A$ with agents $\mathcal V_1$, asynchronous communication, and set of attacked agents $\mathcal A=\{1\}$.}
\label{fig:repConAsyn} 
\end{figure}

\subsection{Dynamic network}\label{sub:dyn_net}
Next, we test the scenario where the network of agents evolves with time and the attacked agents share the same value.
We consider two networks composed by 10 agents, as depicted in Fig.~\ref{fig:GD}, with set of agents $\mathcal V=\{1,\hdots,10\}$ and set of attacked agents $\mathcal A=\{1\}$.

\begin{figure}[h!] 
\centering    
\hfill
\subfigure[Network of agents $\mathcal G_D$.]{\label{fig:a}\includegraphics[width=0.46\textwidth]{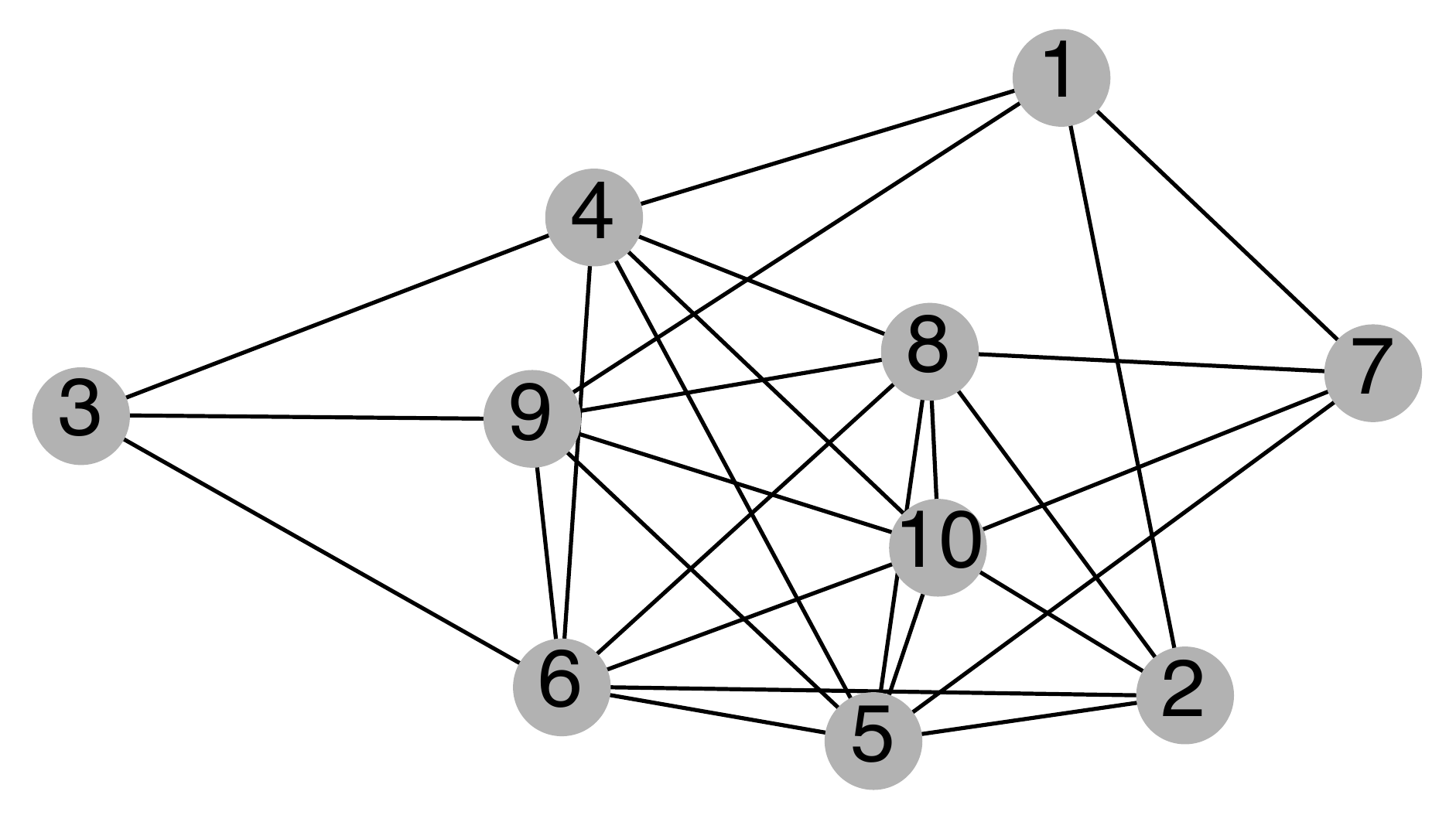}}\hfill
\subfigure[Network of agents $\mathcal G_C$.]{\label{fig:b}\includegraphics[width=0.46\textwidth]{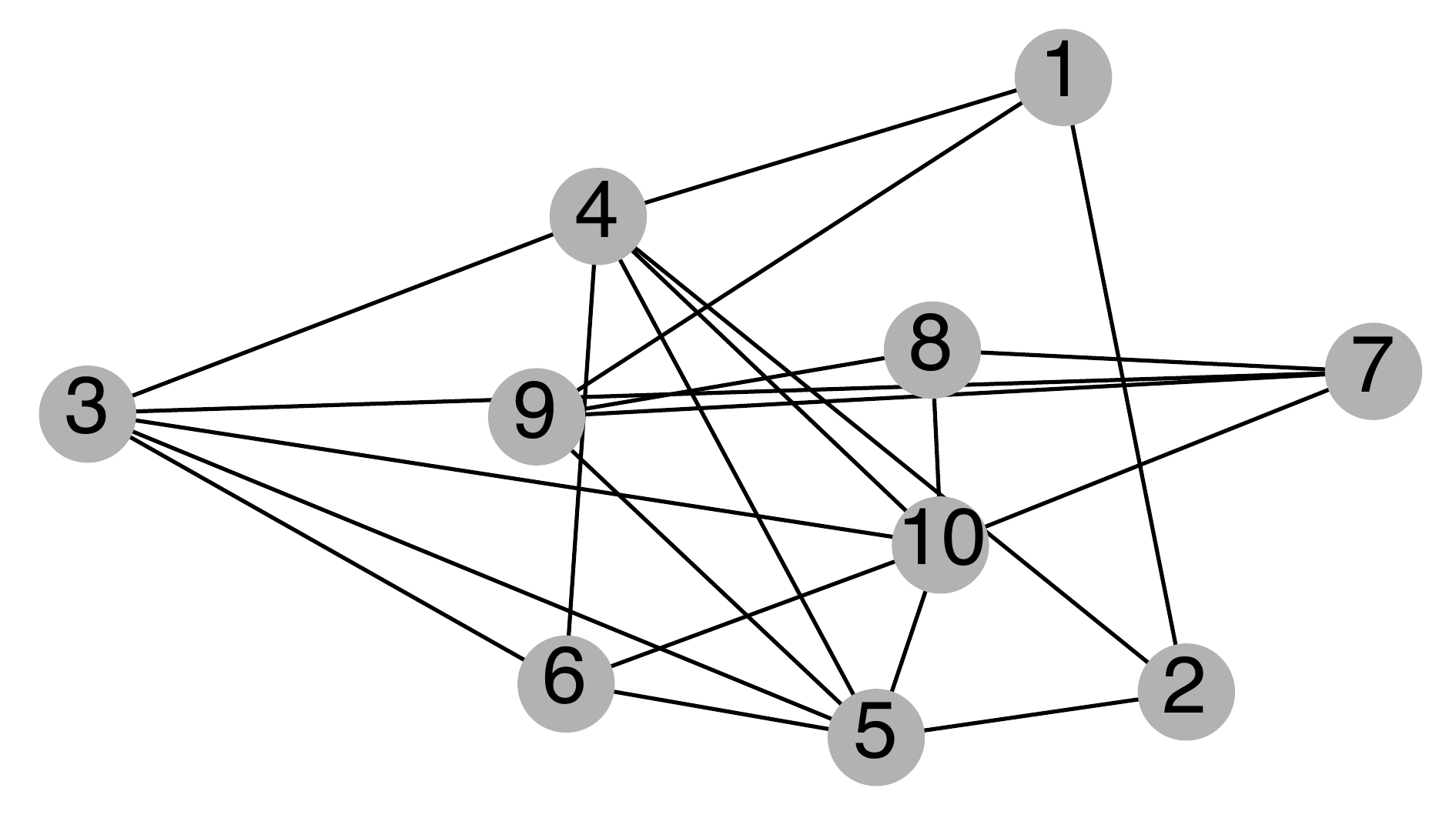}}
\hfill
\caption{}
\label{fig:GD} 
\end{figure}

In the example, we consider that the dynamic network of agents for time instance $k>0$ is given by 
$\mathcal G_1^{(k)}=\begin{cases}
	\mathcal G_C &\text{if } k\leq 10\\
	\mathcal G_D &\text{otherwise}
\end{cases}.$ 
The consensus value of each agent, utilizing the iterative scheme~\eqref{eq:id}, is depicted in Fig.~\ref{fig:C4}.
\begin{figure}[h!] 
\centering      
\includegraphics[width=0.61\textwidth]{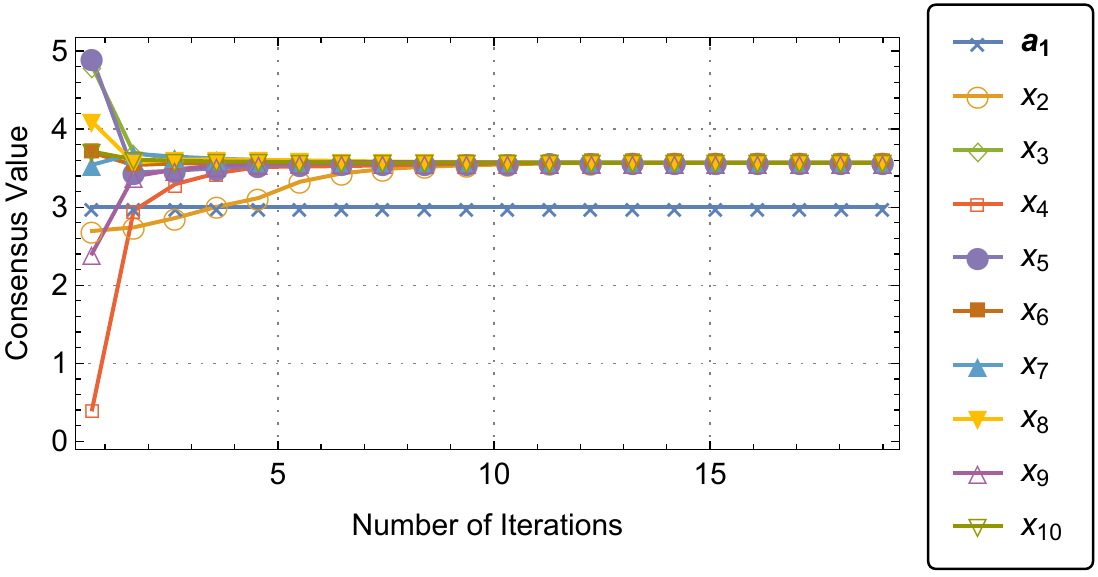}
\caption{Consensus of dynamic network $\mathcal G_1^{(k)}$ with agents $\mathcal V_1$, and set of attacked agents $\mathcal A=\{1\}$.}
\label{fig:C4} 
\end{figure}

\subsection{Dynamic network with noisy agents}\label{sub:dyn_net_noisy}

Finally, we illustrate the scenario where not only the network of agents evolves with time, but also the attacked agents share different values, which are uniform random variables with a fixed mean value. These is captured in the example depicted in  Fig.~\ref{fig:var1}. 
\begin{figure}[h!] 
\centering      
\includegraphics[width=0.61\textwidth]{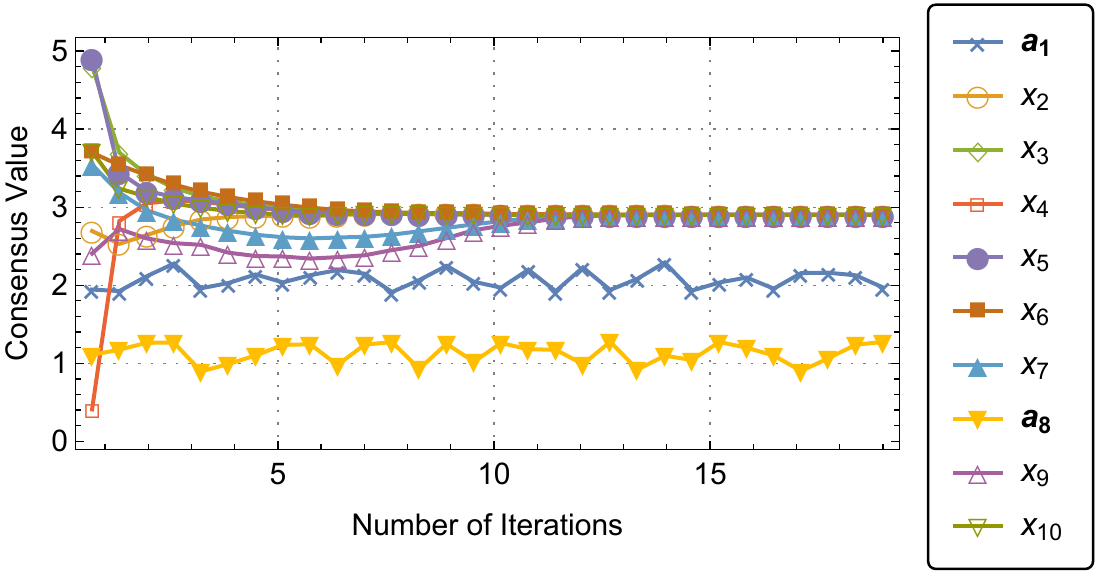}
\caption{Consensus of dynamic network $\mathcal G_1^{(k)}$ with agents $\mathcal V_1$, and set of attacked (noisy) agents $\mathcal A=\{1,8\}$.}
\label{fig:var1} 
\end{figure}


\begin{figure}[h!] 
\centering      
\includegraphics[width=0.61\textwidth]{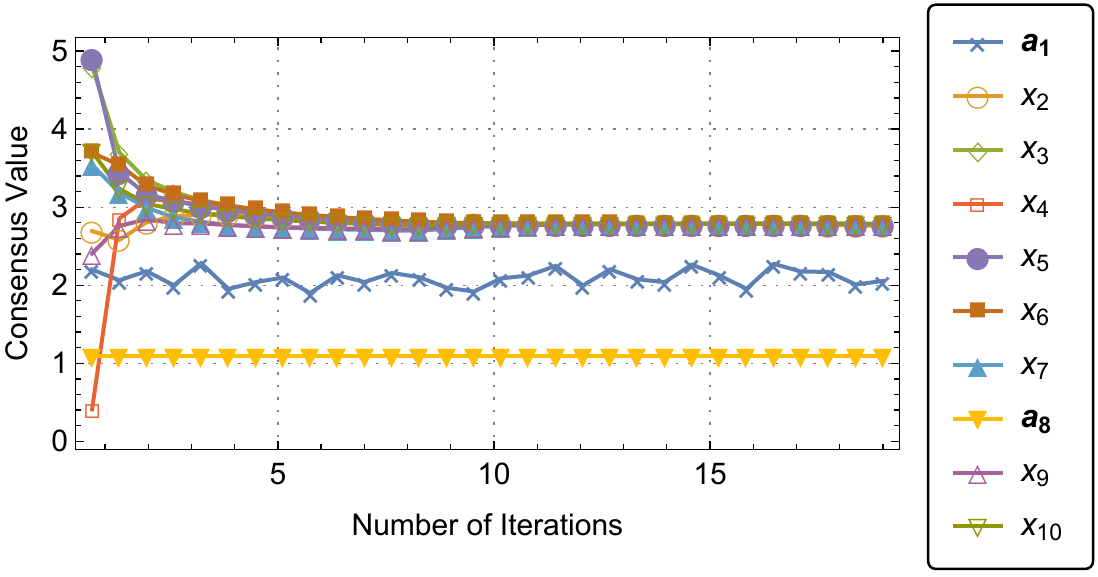}
\caption{Consensus of dynamic network $\mathcal G_1^{(k)}$ with agents $\mathcal V_1$, and set of attacked agents $\mathcal A=\{1,8\}$, where agent $1$ behaves as a noisy node and agent $8$ as an attacked node.}
\label{fig:var} 
\end{figure}
\subsection{Stochastic communication}
When the communication between agents has a stochastic nature, we may still successfully apply \textsf{RepC}. 
This is illustrated in the next example. We consider the  network $\mathcal G_E$ in Fig.~\ref{fig:gStoc}, with $\mathcal V_1$, and the set of attacked agents $\mathcal A=\{1\}$. Further, at each time step, only a random subset of agents communicate between them. The described situation is depicted in Fig.~\ref{fig:stoc}, where the regular agents could effectively detect the attacked node and achieve the true consensus of the network. 
\begin{figure}[h!] 
\centering      
\includegraphics[width=0.46\textwidth]{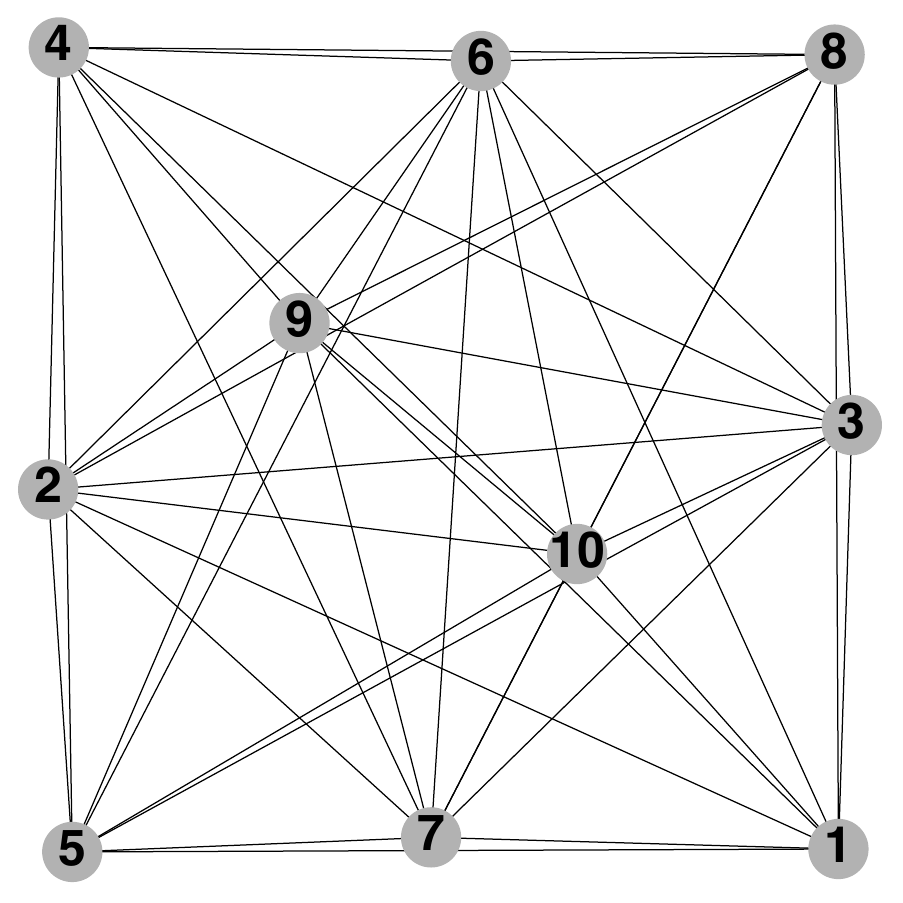}
\caption{Network of agents $\mathcal G_E$.}
\label{fig:gStoc} 
\end{figure}
\begin{figure}[h!] 
\centering      
\includegraphics[width=0.61\textwidth]{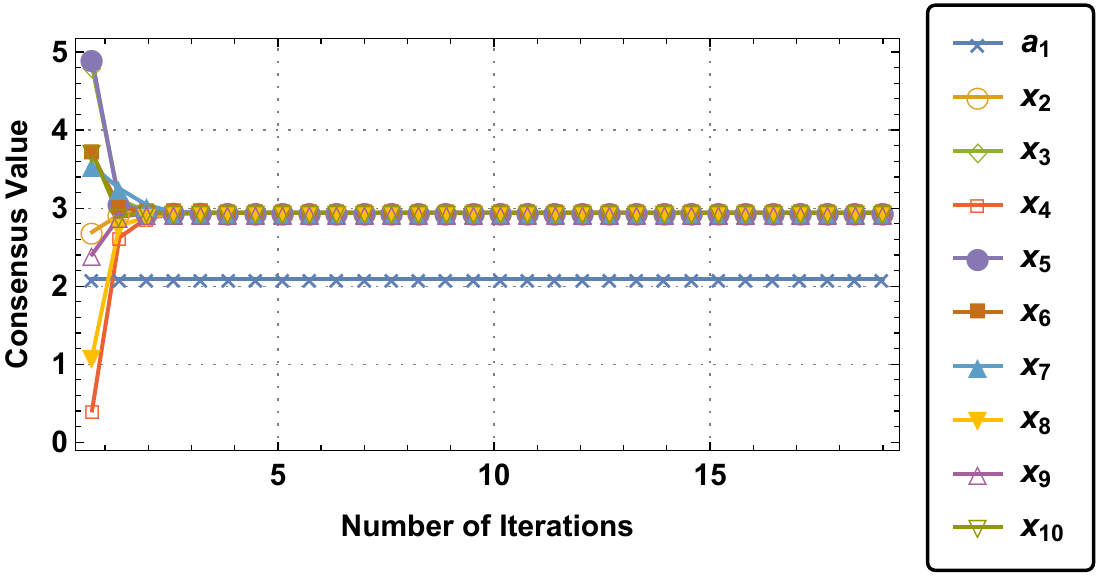}
\caption{Consensus using a \textsf{RepC} method, with network $\mathcal G_E$, set of agents $\mathcal V_1$, set of attacked agents $\mathcal A=\{1\}$ and stochastic communication.}
\label{fig:stoc} 
\end{figure}

\subsection{\textsf{RepC} vs. state-of-the-art}
Here, we illustrate how the proposed algorithm  competes with the state-of-the-art approaches, based on the idea that each agent discards a set of maximum and minimum neighbor values. 

In the next examples, we use the two networks depicted in Fig.~\ref{fig:g1StateArt1}.

\begin{figure}[h!] 
\centering      
\subfigure[Network of agents $\mathcal G_F$.]{\label{fig:a}\includegraphics[width=0.25\textwidth]{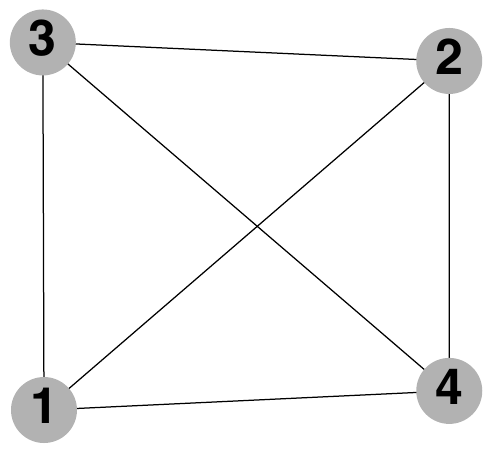}}
\subfigure[Network of agents $\mathcal G_G$.]{\label{fig:b}\includegraphics[width=0.48\textwidth]{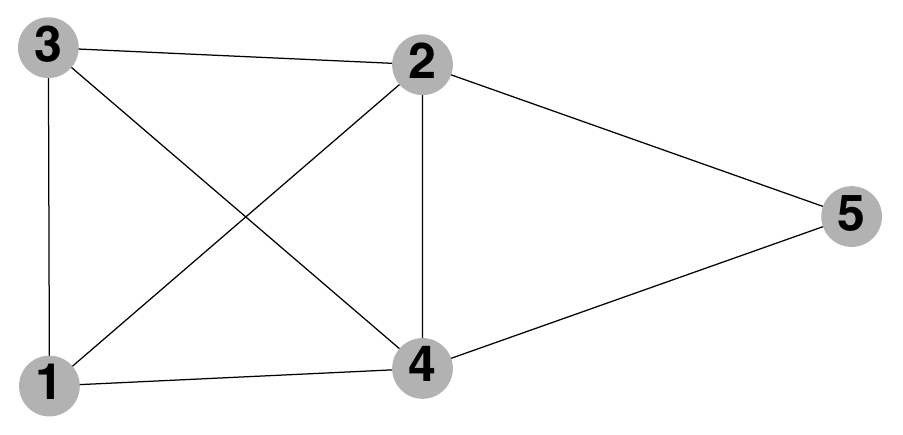}}
\caption{}
\label{fig:g1StateArt1} 
\end{figure}

%
%

In the first example, consider the set of agents $\mathcal V_2=\{1,2,3,4\}$, with the complete network (Fig.~\ref{fig:g1StateArt1}~(a)) and attacked agents $\mathcal A=\{1\}$. 

Using the state-of-the-art, i.e., when each agent discards the maximum and minimum neighbors' values, we obtain the result depicted in Fig.~\ref{fig:CStateArt1}. The method is not able to deter the attack and the regular agents converge to the attacker value. 

\begin{figure}[h!] 
\centering      
\includegraphics[width=0.61\textwidth]{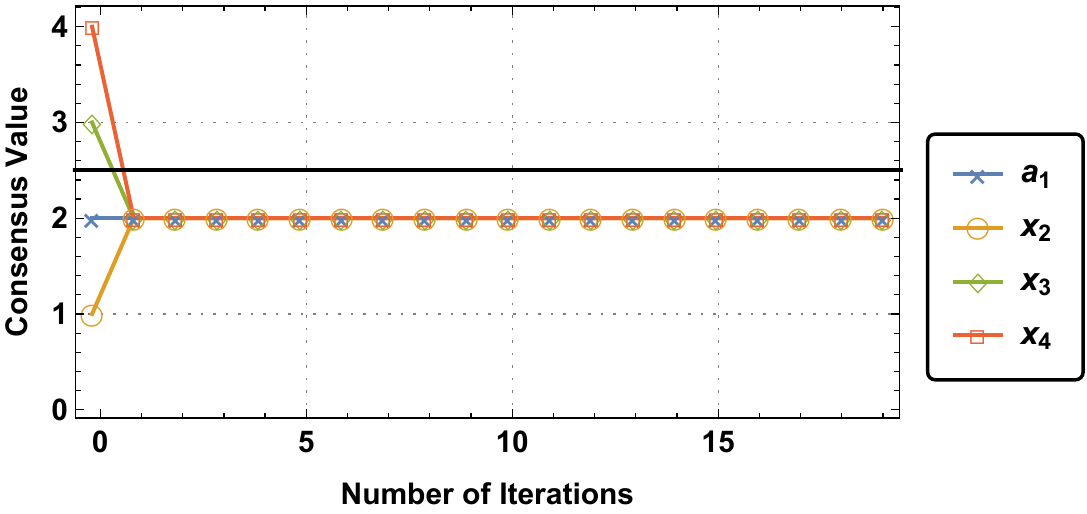}
\caption{Consensus using a \textbf{state-of-the-art} method, with network $\mathcal G_F$, set of agents $\mathcal V_2$, and set of attacked agents $\mathcal A=\{1\}$. The black line is the true consensus value.}
\label{fig:CStateArt1} 
\end{figure}

Using \textsf{RepC}, as illustrated in Fig.~\ref{fig:COur1}, the regular agents converge to a value close to the true value, with a small deviation caused by the  influence of the $\varepsilon$ parameter.  

\begin{figure}[h!] 
\centering      
\includegraphics[width=0.61\textwidth]{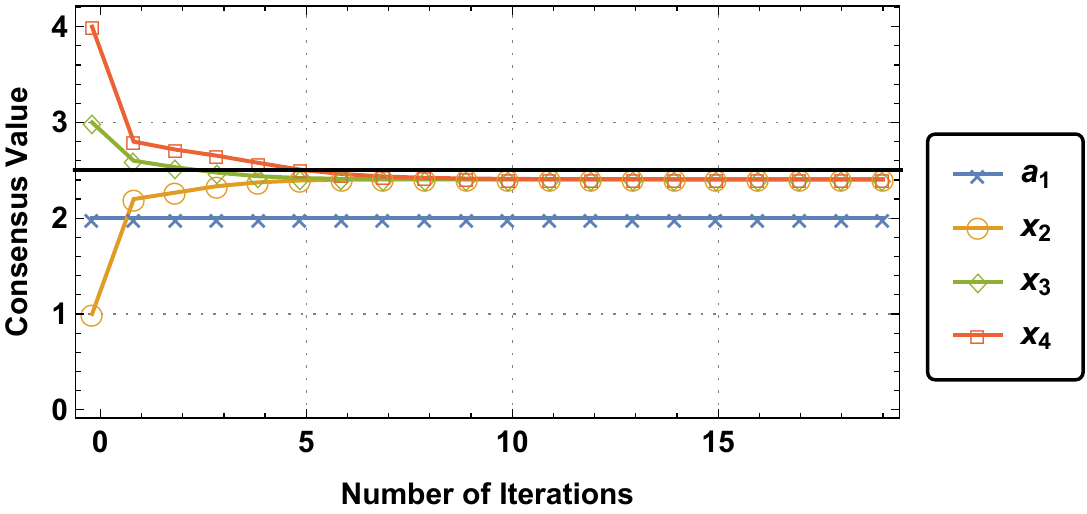}
\caption{Consensus using \textsf{RepC}, with network $\mathcal G_F$, set of agents $\mathcal V_2$, and set of attacked agents $\mathcal A=\{1\}$. The black line is the true consensus.}
\label{fig:COur1} 
\end{figure}

In the second example, we consider the network of agents depicted in Fig.~\ref{fig:g1StateArt1}~(b), the set of agents $\mathcal V_3=\{1,2,3,4,5\}$ and attacked agents set $\mathcal A=\{1\}$. 
The example portraits the scenario where an attacker stubbornly sends to the neighbors the true consensus value.

In Fig.~\ref{fig:CStateArt2}, we present the consensus states of the agents when using the state-of-the-art approach. We can see that the agents are not able to converge to the true consensus value. 

\begin{figure}[h!] 
\centering      
\includegraphics[width=0.61\textwidth]{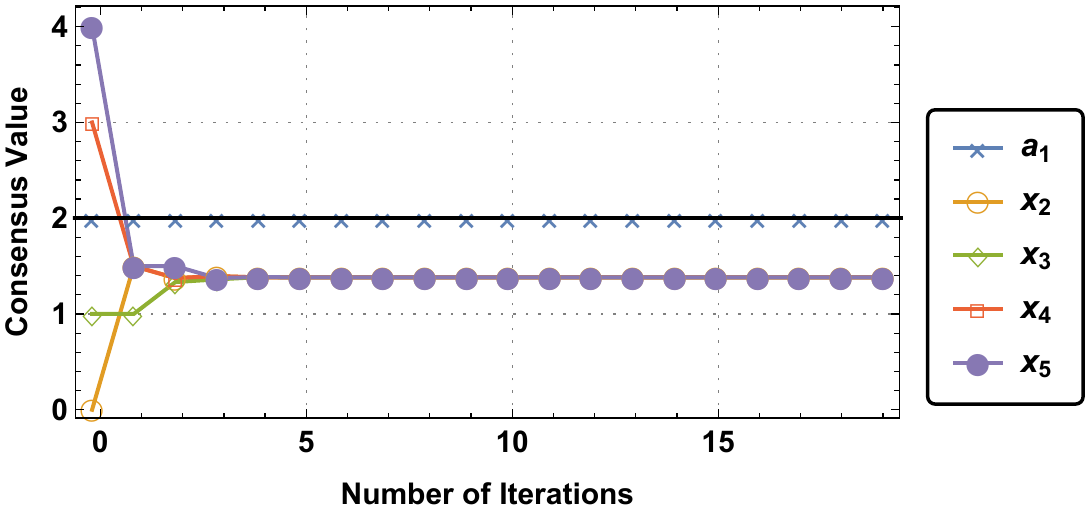}
\caption{Consensus using a \textbf{state-of-the-art} method, with network $\mathcal G_G$, set of agents $\mathcal V_3$, and set of attacked agents $\mathcal A=\{1\}$. The black line is the true consensus value.}
\label{fig:CStateArt2} 
\end{figure}

Subsequently, we present the consensus state of the agents when using \textsf{RepC}. In this case, the agents are able to converge to the true consensus of the network.
 
\begin{figure}[h!] 
\centering      
\includegraphics[width=0.61\textwidth]{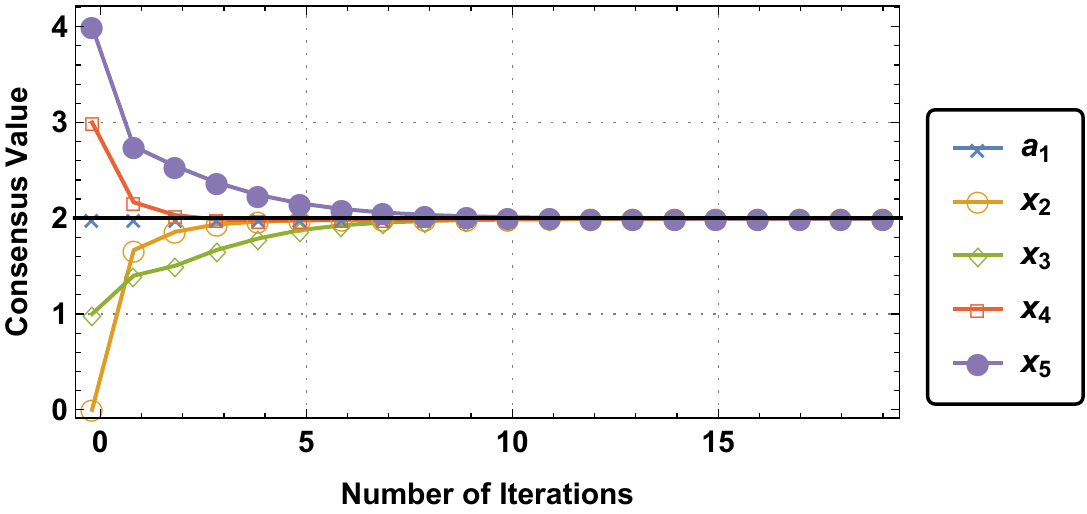}
\caption{Consensus using \textsf{RepC}, with network $\mathcal G_F$, set of  agents $\mathcal V_3$, and set of attacked agents $\mathcal A=\{1\}$. The black line is the true consensus.}
\label{fig:COur2} 
\end{figure}


%
%
%

\subsection{Consensus final error} 

To explore how different is the final consensus value produced by \textsf{RepC} and the consensus value without attacked nodes, we use the complete network of $5$ agents depicted in Fig.~\ref{fig:GA}~(a), with agents' initial states  
$x^{(0)}=\left[\begin{smallmatrix} 1 & 0 & 3 & 1.2 & 2.5 \end{smallmatrix}\right]^\intercal$, where agent $1$ is under attack and shares values from a Gaussian noise with mean $\mu$ and standard deviation $\sigma$. 
The consensus value, without attacked nodes, is $1.489$. 
We compute the absolute difference between the consensus value found with \textsf{RepC} in the non-attacked case and the consensus value obtained with \textsf{RepC} when the attacker follows the mentioned strategy. 
Moreover, we ranged $\mu$ from $0$ to $1$ in steps of $0.005$ and ranged $\sigma$ from $0.1$ to $1$ in steps of $0.005$, repeating each attacking scenario $20$ times to compute the absolute average error. 
The results of the experiment are depicted in Fig.~\ref{fig:error}. 
\begin{figure}[H] 
\centering      
\includegraphics[width=0.46\textwidth]{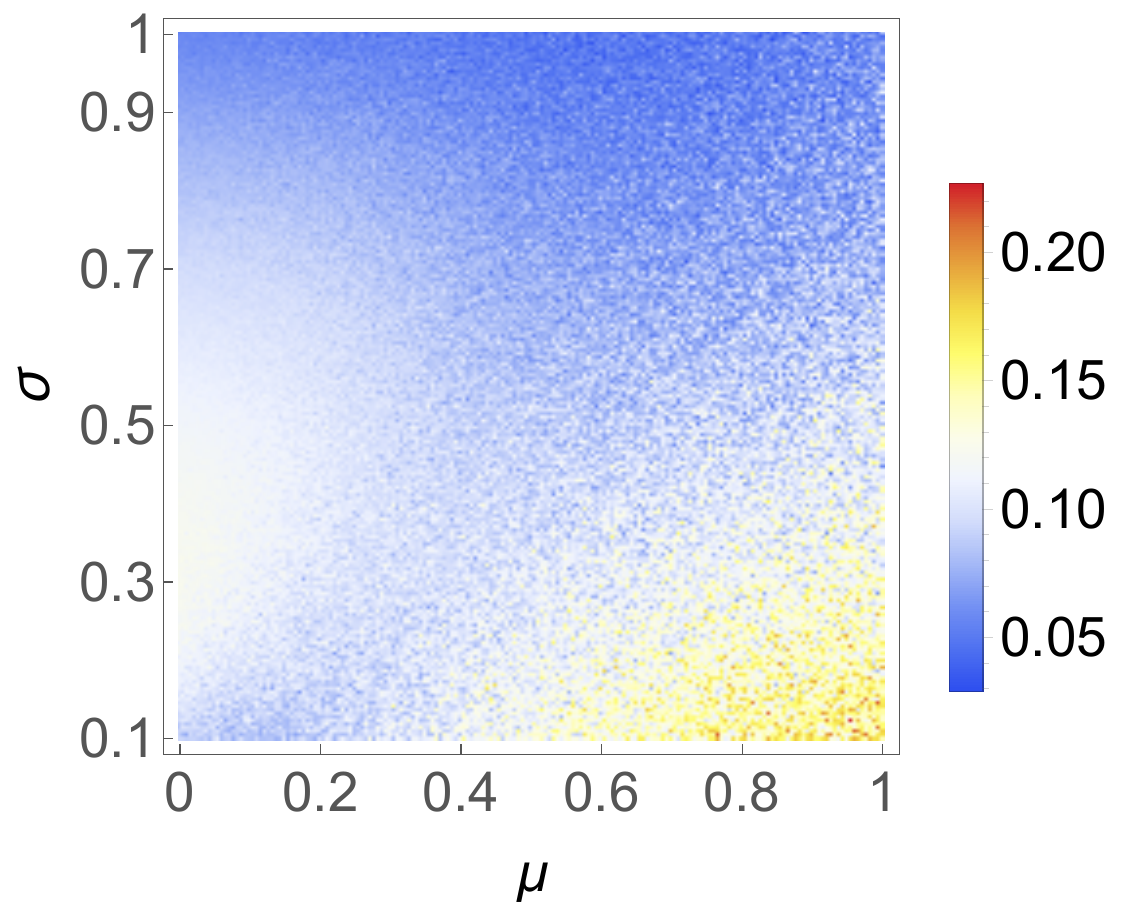}
\caption{Absolute difference between the consensus resulting from Algorithm \textsf{RepC} when node one $1$ is under attack to share the a Gaussian noise with mean $\mu$ and standard deviation $\sigma$.}
\label{fig:error}  
\end{figure} 

We can see from Fig.~\ref{fig:error} that, in average, we obtain a small final consensus error. When $\mu$ is close to $1$ and $\sigma$ is close to $0.1$, the attacked node state value is close to what would be when in the non-attacked scenario ($x_1^{(0)}=1$) and it takes more time to be classified as an attacker by its neighbors, yielding a slightly larger final consensus error. 


\section{Conclusions}\label{sec:conc}
In this work, we presented a reputation-based consensus algorithm (\textsf{RepC}) for discrete-time synchronous and asynchronous communications in a possibly dynamic networks of agents.  
By assigning a reputation value to each neighbor, an agent may discard information from neighbors presenting an abnormal behavior. 

Algorithm \textsf{RepC} converges with exponential rate and it has polynomial time complexity. More specifically, for a network of agents, if we run $i\in\mathbb N$ iterations of \textsf{RepC}, we incur in $\mathcal O(l^2i)$ time complexity, where $l$ is the greatest number of neighbors a node has in that network. 
For attacks with certain properties, we proved that the algorithm does not produce false positives. For other types of attacks, we illustrate the behavior of the proposed algorithm, which also worked as envisaged. 

Future work directions include extending the algorithm for continuous-time  consensus and to introduce the reputation idea for other types of consensus algorithms. 
Furthermore, an important additional theoretical property to prove (even if only for some sorts of attacks) is whether or not \textsf{RepC} may cause an agent to wrongly classify a neighbor as attacked, i.e., if there are false negatives.

{\small
\bibliographystyle{alpha}
\bibliography{acc2016}
}

\end{document}